\newtheorem{definition}{Definition}
\newtheorem{theorem}[definition]{Theorem}
\newtheorem{proposition}[definition]{Proposition}
\newtheorem{corollary}[definition]{Corollary}
\newtheorem{rem}[definition]{Remark}
\newenvironment{remark}{\begin{rem}  \rm }{\end{rem}}
\newtheorem{rems}[definition]{Remarks}
\newenvironment{remarks}{\begin{rems}  \rm }{\end{rems}}
\newtheorem{example}[definition]{Example}
\newcommand\unit{\hbox{\rm 1\kern-2.8truept l}} 
\newcommand\re{\Re \kern-1.4truept e}
\newcommand\im{\Im \kern-1.4truept m}
\newcommand\Ker{\rm{Ker}}
\newcommand\Ran{{\rm{Ran\,}}}
\newcommand\spanno{{\rm{span}}}
\newcommand\M{\mathcal{M}}
\def\rel3{\alpha}
\newcommand{\h}{\mathsf{h}} 
\newcommand{\kk}{\mathsf{k}} 
\newcommand{\mm}{\mathsf{m}}
\newcommand{\ff}{\mathsf{f}}
\newcommand{\Ss}{\mathsf{s}}
\newcommand{\T}{\mathcal{T}} 
\newcommand{\wT}{\widetilde{\T}}
\newcommand{\Ll}{\mathcal{L}} 
\newcommand{\nT}{\mathcal{N(\mathcal{T})}} 
\newcommand{\FT}{\mathcal{F(\mathcal{T})}} 
\newcommand{\B}{\mathcal{B}(\h) } 
\newcommand{\E}{\mathcal{E}} 
\newcommand{\C}{\mathbb{C}} 
\newcommand{\nm}{\|} 
\newcommand{\RT}{\mathcal{R}(\T_*)}
\newcommand{\scal}[2]{\langle{#1},{#2}\rangle} 
\newcommand{\ee}[2]{|{e_{#1}}\rangle\langle{e_{#2}}|} 
\newcommand{\kb}[2]{|{#1}\rangle\langle{#2}|}
\newcommand{\tr}[1]{{\rm tr }\left(#1\right)}
\newcommand{\mi}{{\mathrm{i}}}
\begin{document}

\title{Relationships between the decoherence-free algebra and the set of fixed points}

\author{F. Fagnola$^1$, E. Sasso$^2$,
V. Umanit\`a$^2$}

\date{}

\author{F. Fagnola$^1$, E. Sasso$^2$,
V. Umanit\`a$^2$}
\maketitle

\maketitle

\begin{abstract}
We show that, for a {Quantum Markov Semigroup (QMS)} with a faithful normal invariant state, the atomicity of the
decoherence-free subalgebra and environmental decoherence are equivalent. Moreover, we characterize the set of reversible
states and explicitly describe the relationship between the decoherence-free subalgebra
and the fixed point subalgebra for QMSs with the above equivalent properties.
\end{abstract}

\section{Introduction}
Starting from the fundamental papers of Gorini-Kossakowski-Sudharshan \cite{GoKoSu}
and Lindblad \cite{Lindb} the structure of uniformly continuous quantum Markov
semigroups (QMS) $\T=(\T_t)_{t\ge 0}$, or, in the physical terminology, quantum
dynamical semigroups, and their generators, has been the object of significant attention.

The increasing interest in mathematical modelling of decoherence, coherent quantum
computing and approach to equilibrium in open quantum systems continues to motivate
investigation on special features of QMS. Special attention is paid to subalgebras or subspaces where irreversibility and dissipation disappear (see \cite{AFR,ABFJ,BlOl,DFSU,DFR,LW,TV}
and the references therein). States leaving in such subspaces are promising candidates
for storing and manipulating quantum information.

The decoherence-free subalgebra, where completely positive maps $\T_t$ of the semigroup
act as {automorphisms}, and the set of fixed points, which is a subalgebra when there exists
a faithful invariant state, also allow us to gain insight into the structure of a QMS, its invariant states
and environment induced decoherence. Indeed, its structure as a von
Neumann algebra, has important consequences on the structure and the action of the whole
QMS. Recently, we showed in \cite{DFSU} that, if the decoherence-free subalgebra of
a {uniformly} continuous QMS is atomic, it induces a decomposition of the system into its noiseless
and purely dissipative parts, determining the structure of invariant states, as well as
decoherence-free subsystems and subspaces \cite{TV}. In particular, we provided a full
description of invariant states extending known ones in the finite dimensional case \cite{BN}.

In this paper we push further the analysis of {uniformly} continuous QMS with atomic
decoherence-free subalgebra and a faithful invariant state proving a number new
results we briefly list and outline below.
\begin{enumerate}
\item Environment induced decoherence (\cite{BlOl,CSU2,CSU-new}) holds if and only if
$\nT$ is atomic. In this case the decoherence-free subalgebra is generated by the set of
eigenvectors corresponding to modulus one eigenvalues of the completely positive maps $\T_t$,
namely, in an equivalent way, by the eigenvectors with purely imaginary eigenvalue of the
generator (Theorem \ref{equiv-NT-atom}).
\item The decoherence-free subalgebra and the set of so-called \emph{reversible states},
i.e. the linear space generated by eigevectors corresponding to modulus 1 eigenvalues
of predual maps $\T_{*t}$ are in the natural duality of a von Neumann algebra with its
predual (Theorem \ref{RT=NT_*}). Moreover, Theorems \ref{th:rev-stat} and \ref{th:form-eta}
explicitly describe the structure of reversible states.
\item We find a spectral characterization of the decomposition of the fixed point algebra (Theorem \ref{NT-atomic-FT}).
Moreover, the exact relationship between $\FT$ and $\nT$ (Theorems \ref{NT-atomic-FT}
and \ref{FT-atomic-NT}) is established in an explicit and constructive way allowing one
to find the structure of each one from the structure of the other.
\end{enumerate}

Loosely speaking one can say that, for QMSs with a faithful invariant state,
the same conclusions can be drawn replacing finite dimensionality of the system
Hilbert space by atomicity of the decoherence-free subalgebra.\\
Counterexamples (Examples \ref{ex:Mr} and \ref{NTRT})
show that, in general, the above conclusions may fail if
for QMSs without faithful normal invariant states.
\smallskip

{The above results, clarify then the relationships between the atomicity of the
decoherence-free subalgebra, environmental decoherence, ergodic decomposition of
the trace class operators, and the structure of fixed points.

In particular the first result implies that the decomposition induced by decoherence coincides
with the Jacobs-de-Leeuw-Glickeberg (JDG) splitting. Such decomposition was originally introduced
for weakly almost periodic semigroups and generalized to QMSs on von Neumann algebras in
\cite{Kummerer, Hellmich} at all. It is among the most useful tools in the study of the
asymptotic behavior of operators semigroups on Banach spaces or von Neumann algebras.
Indeed, it provides a decomposition of the space into the direct sum of the space generated
by eigenvectors of the semigroup associated with modulus $1$ eigenvalues, and the remaining
space, called stable, consisting of all vectors whose orbits have $0$ as a weak cluster point.
Under suitable conditions, we obtain the convergence to $0$  for each vector
belonging to the stable space.

On the other hand, the ergodic decomposition of trace class operators (which is a particular
case of the JDS splitting applied to the predual of $\T$), allows one, for instance, to
determine reversible subsytems by spectral calculus.} Determining reversible states, in
particular, is an important task in the study of irreversible (Markovian) dynamics because
these states retain their quantum features that are exploited in quantum
computation (see \cite{ABFJ,TV} and the references therein). {More precisely, reversible (or rotating) and invariant states of a quantum channel (acting on $M_n(\mathbb{C})$ for some $n>1$) allow to classify kinds of information that the process can preserve. When the space is finite-dimensional and there exists a faithful invariant state, the structure of these states can be easily found (see e.g. Lemma $6$  and Section $V$ in \cite{viola}, and Theorems $6.12$, $6.16$ in \cite{wolf}) through the decomposition of $\nT$ and the algebra of fixed points $\FT$  in \lq\lq block diagonal matrices\rq\rq, i.e. in their canonical form given by the structure theorem for matrix algebras (see Theorem $11.2$ in \cite{Take}). Since the same decomposition holds for atomic von Neumann algebras, in this paper we generalize these results to uniformly continuous QMSs acting on $\B$ with $\h$ infinite-dimensional.

The paper is organized as follows. In Section \ref{sect:df-subalg} we collect
some notation and known results on the structure of norm-continuous QMS with atomic
decoherence-free subalgebra and the structure of their invariant states.
In Section \ref{sect:at-dec}, after recalling some known results from \cite{CSU-new}
on the relationship between EID and Jacobs-de Leeuw-Glickeberg decomposition,
we prove {the main result of this paper: the equivalence between EID and atomicity of the decoherence free subalgebra.}
The predual algebra of the decoherence-free subalgebra is characterized
in Section \ref{sect:rev-stat} as the set of reversible states.
Finally, in Section \ref{sect:FT-NT}, we study the structure of the set
of fixed points of the semigroup and its relationships with the decomposition
of $\nT$ when this algebra is atomic.

\section{The structure of the decoherence-free algebra}
\label{sect:df-subalg}

Let $\h$ be a complex separable Hilbert space and let $\B$ the algebra of all bounded
operators on $\h$ with unit $\unit$. A QMS on $\B$  is a semigroup $\T=(\T_t)_{t\ge 0}$
of completely positive, identity preserving normal maps which is also weakly$^*$
continuous. In this paper we assume $\T$ uniformly continuous i.e.
\[
\lim_{t\to 0^+} \sup_{\Vert x\Vert \le 1}
\left\Vert \T_t(x)-x\right\Vert=0,
\]
so that there exists a linear bounded operator $\Ll$ on $\B$ such that $\T_t=e^{t\Ll}$.
The operator $\Ll$ is the generator of $\T$, and it can be represented in the
well-known (see \cite{Partha}) Gorini-Kossakowski-Sudarshan-Lindblad (GKSL) form as
\begin{equation}\label{eq:GKSL}
\Ll(x)=\mi [H,x]-\frac{1}{2}\sum_{\ell\geq 1}
\left(L_\ell^*L_\ell x-2L_\ell^*xL_\ell+xL_\ell^*L_\ell\right),
\end{equation}
where $H=H^*$ and $(L_\ell)_{\ell\geq 1}$ are bounded operators on $\h$ such that the series
$\sum_{\ell\geq 1}L^*_\ell L_\ell$ is strongly convergent  and $[\cdot,\cdot]$ denotes
the commutator $[x,y]=xy-yx$. The choice of operators $H$ and $(L_\ell)_{\ell\geq 1}$
is not unique, but this will not create any inconvenience in this paper. More precisely,
we have the following characterization (see \cite{Partha}, Proposition $30.14$ and the
discussion below the proof of Theorem $30.16$).

\begin{theorem} \label{GKSL}
Let $\Ll$ be the generator of a uniformly continuous QMS on $\B$. Then there exist a
bounded selfadjoint operator $H$ and a sequence $(L_\ell)_{\ell\geq 1}$ of elements in
${\mathcal{B}}(\h)$ such that:
\begin{enumerate}
\item $\sum_{\ell\geq 1}L^*_\ell L_\ell$ is strongly convergent,
\item if $\sum_{\ell\geq 0}\vert c_\ell\vert^2<\infty$ and $c_0 \unit
+\sum_{\ell\geq 1}c_\ell L_\ell=0$
for scalars $(c_\ell)_{\ell\geq 0}$ then $c_\ell=0$ for every $\ell\geq 0$,
\item $\mathcal{L}(x)=\mi [H,x]
      -\frac{1}{2}\sum_{\ell\geq 1}
      \left(L^*_\ell L_\ell x-2L^*_\ell xL_\ell+xL^*_\ell L_\ell\right)$
for all $x\in{\mathcal{B}}(\h)$.
\end{enumerate}
\end{theorem}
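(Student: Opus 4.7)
\medskip

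\noindent\textbf{Proof plan.} The plan is to reduce the existence of the GKSL form to a Kraus-type decomposition of a completely positive normal auxiliary map. Concretely, I seek a bounded operator $K \in \B$ and a normal completely positive map $\Phi: \B \to \B$ such that
\begin{equation*}
\Ll(x) = \Phi(x) - Kx - xK^*, \qquad \Phi(\unit) = K + K^*.
\end{equation*}
Once these are in hand, the separable Kraus representation of $\Phi$ furnishes $\Phi(x) = \sum_{\ell \geq 1} L_\ell^* x L_\ell$ with $\sum_\ell L_\ell^* L_\ell = \Phi(\unit) = K + K^*$ strongly convergent, which is precisely condition (1). Decomposing $K = \tfrac{1}{2}(K+K^*) - \mi H$ with $H := \tfrac{\mi}{2}(K - K^*) = H^*$ and substituting into the identity above, a direct computation of $Kx+xK^*$ yields the GKSL expression in (3).

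\medskip

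\noindent To produce $K$ and $\Phi$ I exploit the complete positivity and identity preservation of every $\T_t$: differentiating at $t=0$ shows that $\Ll(\unit) = 0$ and that the dissipation form
\begin{equation*}
D_\Ll(x,y) := \Ll(x^* y) - \Ll(x^*) y - x^* \Ll(y)
\end{equation*}
is a completely positive $\B$-valued kernel on $\B \times \B$. A Kolmogorov--Stinespring-type dilation of $D_\Ll$ then produces a separable auxiliary Hilbert space $\kk$ and a bounded map $V: \h \to \h \otimes \kk$; setting $L_\ell := (I_\h \otimes \langle e_\ell|)V$ for an orthonormal basis $(e_\ell)_{\ell \geq 1}$ of $\kk$ extracts the $L_\ell$'s, and the operator $K$ is pinned down by matching $\Ll(x) + Kx + xK^*$ with $\sum_\ell L_\ell^* x L_\ell$. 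Normality of $\Phi$ follows from that of $\Ll$ and the inner terms $Kx+xK^*$.

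\medskip

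\noindent The technically delicate point is condition (2). Any candidate GKSL representation admits the ambiguity $L_\ell \mapsto L_\ell + c_\ell \unit$ with $\sum |c_\ell|^2 < \infty$, compensated by a corresponding redefinition of $H$, together with unitary rotations $L_\ell \mapsto \sum_m U_{\ell m} L_m$ in the multiplicity space $\kk$. A nontrivial relation $c_0 \unit + \sum_\ell c_\ell L_\ell = 0$ corresponds exactly to a direction in $\C \oplus \kk$ along which the enlarged dilation map $\xi \mapsto (\xi, V\xi) \in \h \oplus (\h \otimes \kk)$ collapses. Passing to a minimal dilation removes all such collapsing directions and enforces (2). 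I expect this minimality/uniqueness step to be the main obstacle, since it requires handling the scalar $\unit$-component together with countably many $L_\ell$-components simultaneously in an infinite-dimensional multiplicity space; this is precisely the content of Parthasarathy's Proposition 30.14 invoked in the statement, and the details of the argument are carried out there.
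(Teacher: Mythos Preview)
The paper does not give its own proof of this theorem; it simply cites Parthasarathy's book (Proposition~30.14 and the discussion after Theorem~30.16) as the source. Your proposal is a faithful outline of exactly that standard argument---construct the completely positive dissipation form $D_\Ll$, dilate it \`a la Stinespring/Kraus to extract the $L_\ell$'s and $K$, then pass to a minimal dilation to kill linear dependencies among $\unit, L_1, L_2,\dots$---and you yourself correctly identify the minimality step as Parthasarathy's Proposition~30.14. So there is nothing to compare: your sketch and the paper's cited reference are the same approach, and your plan is sound.
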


We recall that, for an arbitrary von Neumann algebra $\mathcal{M}$, its predual space
$\mathcal{M}_*$ is the space of $w^*$-continuous functionals on $\mathcal{M}$ (said \emph{normal}).
It is a well-known fact that for all $\omega\in\mathcal{M}_*$ there exists $\rho\in\mathfrak{I}(\h)$,
the space of trace-class operators, such that  $\omega(x)=\tr{\rho x}$ for all $x\in\mathcal{M}$.
In particular, if $\omega$ is a positive and $||\omega||=1$, it is called \emph{state}, and $\rho$
is positive with $\tr{\rho}=1$, i.e. $\rho$ is a \emph{density}.\\
If $\mathcal{M}=\B$, every normal state $\omega$ has a unique density $\rho$. Therefore, in this case, we can identify them. Finally, $\rho$ is \emph{faithful} if $\tr{\rho x}=0$ for a positive $x\in\B$ implies $x=0$ (see \cite{Take}, Definition 9.4).\smallskip

Given a $w^*$-continuous operator $\mathcal{S}:\mathcal{M}\to\mathcal{M}$, we can define its \emph{predual map}  $\mathcal{S}_*:\mathcal{M}_*\to\mathcal{M}_*$ as $\mathcal{S}_*(\omega)=\omega\circ\mathcal{S}$.\\
In particular, for $\mathcal{M}=\B$, by considering the predual map of every $\T_t$, we obtain the \emph{predual semigroup} $\T_*=(\T_{*t})_t$ satisfying
\[
\tr{\T_{*t}(\rho)x}=\tr{\rho\T_{*t}(x)}\qquad\forall\,\rho\in\mathfrak{I}(\h), \ x\in\B.
\]

The \emph{decoherence-free (DF) subalgebra} of $\T$ is defined by
\begin{equation}\label{eq:NT-def}
\nT=\{x\in\B\,:\,\T_t(x^*x)=\T_t(x)^*\T_t(x),\ \T_t(xx^*)=\T_t(x)\T_t(x)^*\  \forall\,t\geq 0\}.
\end{equation}
It is a well known fact that $\nT$ is the biggest von Neumann subalgebra of $\B$
on which every $\T_t$ acts as a $*$-homomorphism (see e.g.
Evans\cite{Evans} Theorem 3.1). {Moreover, the following facts hold (see \cite{DFR}  Proposition 2.1).}

\begin{proposition}\label{prop-struct-NT}
Let $\T$ be a QMS on $\B$ and let ${\mathcal{N}}(\T)$ be the set defined by
(\ref{eq:NT-def}). Then
\begin{enumerate}
\item  $\nT$ is invariant with respect to every $\T_t$,
\item $\nT=\{\delta_H^{(n)}(L_k), \delta_H^{(n)}(L_k^*)\,:\,n\geq 0\}^\prime$, where $\delta_H(x):=[H,x]$,
\item $\T_t(x)=e^{\mi t H}xe^{-\mi tH}$ for all $x\in\nT$, $t\geq 0$,
\item if $\T$ possesses a faithful normal invariant state, then $\nT$ contains the set of fixed points $\FT=\{L_k, L_k^*, H\,:\,k\geq 1\}^\prime$.
\end{enumerate}
\end{proposition}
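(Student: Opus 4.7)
My plan is to treat (1) first, then (2) and (3) together (they share one key computation), and finally (4). The algebraic engine is the dissipation identity extracted from the GKSL form (\ref{eq:GKSL}) by direct expansion:
\begin{equation*}
\Ll(x^*x)-\Ll(x)^*x-x^*\Ll(x)=\sum_{\ell\geq 1}[L_\ell,x]^*[L_\ell,x],
\end{equation*}
together with its $xx^*$ counterpart; the Hamiltonian term $\mi[H,\cdot]$ is a $*$-derivation and cancels.

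Item (1) needs only the semigroup law: for $x\in\nT$ one has
\begin{equation*}
\T_s\bigl(\T_t(x)^*\T_t(x)\bigr)=\T_s\bigl(\T_t(x^*x)\bigr)=\T_{s+t}(x^*x)=\T_{s+t}(x)^*\T_{s+t}(x)=\T_s(\T_t(x))^*\T_s(\T_t(x))
\end{equation*}
and likewise for $xx^*$, so $\T_t(x)\in\nT$.

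For (2) and (3), take $x\in\nT$. By (1), the function $t\mapsto\T_t(x^*x)-\T_t(x)^*\T_t(x)$ is identically zero; differentiating at $t=0$ and applying the dissipation identity yields $\sum_\ell[L_\ell,x]^*[L_\ell,x]=0$, so $[L_\ell,x]=0$ for each $\ell$, and the $xx^*$ identity analogously forces $[L_\ell^*,x]=0$. Feeding these into (\ref{eq:GKSL}) collapses the dissipative part to zero and gives $\Ll(x)=\mi[H,x]$. A Jacobi-identity induction shows that $[H,x]$ still satisfies the same commutation relations with every $L_\ell$ and $L_\ell^*$, hence $\Ll^n(x)=(\mi\,\delta_H)^n(x)$, and summing the norm-convergent exponential series produces $\T_t(x)=e^{\mi tH}xe^{-\mi tH}$; this proves (3). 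For (2), since $\T_t(x)\in\nT$ we have $[L_\ell,\T_t(x)]=0$, equivalently $[e^{-\mi tH}L_\ell e^{\mi tH},x]=0$ for all $t$; expanding
\begin{equation*}
e^{-\mi tH}L_\ell e^{\mi tH}=\sum_{n\geq 0}\frac{(-\mi t)^n}{n!}\delta_H^{(n)}(L_\ell)
\end{equation*}
as a norm-convergent series and matching coefficients gives $[\delta_H^{(n)}(L_\ell),x]=0$ for every $n$, and the same for $L_\ell^*$, proving the inclusion $\nT\subseteq\{\delta_H^{(n)}(L_k),\delta_H^{(n)}(L_k^*):n\geq 0\}'$. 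For the reverse inclusion, a Jacobi induction shows that the commutant on the right is $\Ll$-invariant and that $\Ll$ restricts to $\mi[H,\cdot]$ on it; hence $\T_t$ acts there as conjugation by $e^{\mi tH}$, which is a $*$-automorphism, so the commutant lies inside $\nT$.

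For (4), I would invoke the identification $\FT=\{L_k,L_k^*,H\}'$ valid under the faithful-invariant-state hypothesis (a classical Frigerio--Verri-type result). Any $x\in\FT$ then satisfies $[H,x]=[L_k,x]=[L_k^*,x]=0$, and the Jacobi induction
\begin{equation*}
[\delta_H^{(n+1)}(L_k),x]=[H,[\delta_H^{(n)}(L_k),x]]-[\delta_H^{(n)}(L_k),[H,x]]
\end{equation*}
propagates this to all iterated commutators, so $x\in\nT$ by (2). The main obstacle I anticipate is the clean derivation and justification of the dissipation identity and the term-by-term differentiation: the series $\sum_\ell L_\ell^*L_\ell$ is only strongly convergent, and one must use the uniform continuity of $\T$ together with $w^*$-closedness of $\nT$ to legitimize every passage to the limit.
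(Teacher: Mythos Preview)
The paper does not prove this proposition; it simply records it with a citation to \cite{DFR}, Proposition~2.1. Your sketch is essentially the standard argument found there, and your treatment of (1), of the inclusion $\supseteq$ in (2), and of (4) is correct.

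There is, however, a small circularity in your derivation of (3). You claim that a ``Jacobi-identity induction'' shows $[H,x]$ again commutes with every $L_\ell$ and $L_\ell^*$. But Jacobi only gives
\[
[L_\ell,[H,x]]=-[\delta_H(L_\ell),x]+\delta_H\bigl([L_\ell,x]\bigr)=-[\delta_H(L_\ell),x],
\]
and at this point you do not yet know that $x$ commutes with $\delta_H(L_\ell)$; that is precisely what you later extract \emph{from} (3) in your proof of the inclusion $\subseteq$ in (2). The fix is to use (1) rather than Jacobi: you have already shown that $[L_\ell,y]=[L_\ell^*,y]=0$ for \emph{every} $y\in\nT$, so $\Ll$ restricts to $\mi\,\delta_H$ on all of $\nT$. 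Since $\nT$ is $\T_t$-invariant by (1) and $\T$ is uniformly continuous, $\nT$ is $\Ll$-invariant, and then $\Ll^n(x)=(\mi\,\delta_H)^n(x)$ for $x\in\nT$ follows by a straight induction on $n$ (equivalently, solve the ODE $\tfrac{d}{dt}\T_t(x)=\mi[H,\T_t(x)]$ with $\T_t(x)\in\nT$). With (3) secured in this way, your argument for the inclusion $\subseteq$ in (2) goes through as written. Note, incidentally, that your Jacobi induction \emph{does} work for the reverse inclusion $\supseteq$ in (2), because there $x$ is assumed from the outset to commute with all iterates $\delta_H^{(n)}(L_\ell)$, so shifting the $H$-commutator inward lands on the next iterate.
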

{In addition, if the QMS is uniformly continuous, its action on $\nT$ is bijective.
\begin{theorem}\label{nT-aut} If $\T$ is a uniformly continuous QMS, then $\nT$ is the
biggest von Neumann subalgebra on which every map $\T_t$ acts as a $*$-automorphism.
\end{theorem}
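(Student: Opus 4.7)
The plan is to establish two assertions: (a) every $\T_t$ restricted to $\nT$ is a $*$-automorphism of $\nT$, and (b) any von Neumann subalgebra $\mathcal{M}\subseteq\B$ on which every $\T_t$ acts as a $*$-automorphism is contained in $\nT$. Part (b) is immediate from the Evans result cited just before the statement, since such an $\mathcal{M}$ is, in particular, a von Neumann subalgebra on which each $\T_t$ acts as a $*$-homomorphism, so $\mathcal{M}\subseteq\nT$. The substantive task is (a).

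From Proposition \ref{prop-struct-NT}(1) and (3) one has $\T_t(\nT)\subseteq\nT$ and $\T_t$ acts on $\nT$ as conjugation by the unitary $e^{\mi tH}$, so $\T_t|_{\nT}$ is automatically a normal, injective $*$-homomorphism of $\nT$ into itself. The whole difficulty reduces to surjectivity: given $y\in\nT$, the natural candidate preimage is $x:=e^{-\mi tH}ye^{\mi tH}$, and one must prove $x\in\nT$. By Proposition \ref{prop-struct-NT}(2), this amounts to showing that $x$ commutes with every operator $\delta_H^{(n)}(L_k)$ and $\delta_H^{(n)}(L_k^*)$ for $n\geq 0$, $k\geq 1$.

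A direct computation yields the conjugation identity $[x,A]=e^{-\mi tH}[y,e^{\mi tH}Ae^{-\mi tH}]e^{\mi tH}$ for any $A\in\B$, so it is enough to show that $y$ commutes with $e^{\mi tH}\delta_H^{(n)}(L_k)e^{-\mi tH}$ for every $n,k$. Uniform continuity is used here in an essential way: boundedness of $H$ makes $\delta_H$ a bounded derivation on $\B$, and hence the Taylor expansion
\[
e^{\mi tH}Ae^{-\mi tH}=\sum_{m\geq 0}\frac{(\mi t)^m}{m!}\delta_H^{(m)}(A)
\]
converges in norm for every $A\in\B$. Applied with $A=\delta_H^{(n)}(L_k)$, it expresses $e^{\mi tH}\delta_H^{(n)}(L_k)e^{-\mi tH}$ as a norm-convergent series of operators $\delta_H^{(m+n)}(L_k)$, each of which commutes with $y$ by the commutant description of $\nT$. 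The analogous computation with $L_k^*$ in place of $L_k$ completes the verification that $x\in\nT$, and surjectivity follows.

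The main obstacle is the surjectivity step, and the decisive technical ingredient is the norm-convergence of the Taylor series for $t\mapsto e^{\mi tH}Ae^{-\mi tH}$, which is precisely where uniform continuity (equivalently, boundedness of $H$) is indispensable. Everything else is bookkeeping with the commutant description of $\nT$ provided by Proposition \ref{prop-struct-NT}(2).
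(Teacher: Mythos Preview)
Your proof is correct, but it takes a different route from the paper. The paper argues surjectivity by extending the uniformly continuous semigroup to a norm-continuous \emph{group} $(\T_t)_{t\in\mathbb{R}}$ (via $\T_t=e^{t\Ll}$ with bounded $\Ll$) and then invoking analyticity in $t$: since the identity $\T_t(z)=e^{\mi tH}ze^{-\mi tH}$ holds for $z\in\nT$ and $t\ge 0$, and both sides are entire functions of $t$, it persists for $t<0$; hence $\T_{-t}(z)=e^{-\mi tH}ze^{\mi tH}$, and one checks (again by analytic continuation of the defining identities of $\nT$) that $\T_{-t}(z)\in\nT$, giving the preimage. You instead work directly with the commutant description in Proposition~\ref{prop-struct-NT}(2) and the Hadamard formula $e^{\mi tH}Ae^{-\mi tH}=\sum_{m\ge 0}\frac{(\mi t)^m}{m!}\delta_H^{(m)}(A)$, whose norm-convergence (from boundedness of $H$) lets you verify that $e^{-\mi tH}ye^{\mi tH}$ lies in the commutant. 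Your approach is more explicit and entirely self-contained given Proposition~\ref{prop-struct-NT}; the paper's is shorter and conceptually cleaner, but its appeal to analyticity leaves the reader to unpack why the membership $\T_{-t}(z)\in\nT$ follows. Both rely on uniform continuity at exactly the same point: boundedness of the generator (equivalently, of $H$).
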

\begin{proof}
The restriction of every $\T_t$ to $\nT$ is clearly injective thanks to item $3$ of Proposition \ref{prop-struct-NT}. Now,
given $x\in\nT$ and $t>0$, we have to prove that $x=\T_t(y)$ for some $y\in\nT$. \\
First of all note that, since the QMS is norm continuous, it can be extended to norm continuous
\emph{group} $(\T_t)_{-\infty < t < +\infty}$ of normal maps on $\B$, and, by analyticity
in $t$, $\T_{-t}(z)\in\nT$ for all $t>0$
and $z\in\nT$, and formula $\T_{-t}(z)=\hbox{\rm e}^{-\mi tH}z\,\hbox{\rm e}^{\mi tH}$ holds.
\end{proof}}
\smallskip

As we said in the introduction, we will study the relationships between the structure of $\nT$
and other problems in the theory of uniformly continuous QMSs in which
the \emph{atomicity} of $\nT$ plays a key role.

First of all, as shown in \cite{DFSU}, the structure of the decoherence-free subalgebra $\nT$ gives
information on the whole QMS. Let us recall these results. Assume that $\nT$ is an atomic
algebra, that is, there exists  an (at most countable) family $(p_i)_{i\in I}$ of mutually
orthogonal non-zero projections, which are minimal projections in the center of $\nT$,
such that $\sum_{i\in I} p_i =\unit$ and each von Neumann algebra $ p_i \nT p_i  $ is
a type I factor. In that case, the subalgebra $\nT$ can be decomposed as
\begin{equation}\label{eq:NT-decomp}
\nT = \oplus_{i\in I} p_i \nT p_i\,.
\end{equation}
The properties of the projections $p_i$ imply their invariance under the semigroup
and more generally the $\T_t$-invariance of each factor $p_i\nT p_i$. Moreover,
each $p_i\nT p_i$ is a type I factor acting on the Hilbert
space $p_i\h$; thus, there exist two countable sequences
of Hilbert spaces $(\kk_i)_{i\in I}$, $(\mm_i)_{i\in I}$,
 and unitary operators $U_i:p_i\h\to\kk_i\otimes\mm_i$ such that
\begin{equation}\label{NTi}
U_ip_i\nT p_iU_i^*
=\mathcal{B}(\kk_i)\otimes\unit_{\mm_i},\qquad
U_ip_i\B p_iU_i^*=\mathcal{B}(\kk_i\otimes\mm_i).
\end{equation}
Therefore, defining $U=\oplus_{i\in I} U_i$, we obtain a unitary
operator $U:\h\to \oplus_{i\in I}(\kk_i\otimes \mm_i)$ such that
\begin{equation}\label{eq:UnT}
U\nT U^*=\oplus_{i\in I}\left(\mathcal{B}(\kk_i)\otimes\unit_{\mm_i}\right).
\end{equation}

As a consequence, we find the following result.

\begin{theorem}\label{th:main}
{$\nT$ is an atomic algebra if and only if there exist two countable sequences of
separable Hilbert spaces $(\kk_i)_{i\in I}$, $(\mm_i)_{i\in I}$ such that
$\h=\oplus_{i\in I}(\kk_i\otimes \mm_i)$ (up to a unitary operator) and
$\nT=\oplus_{i\in I}\left(\mathcal{B}(\kk_i)\otimes\unit_{\mm_i}\right)$ (up to an
isometric isomorphism).}
In this case:
\begin{enumerate}
\item for every GSKL representation of $\Ll$ by means of operators
$H,(L_\ell)_{\ell\ge 1}$, we have
\[
L_\ell =\oplus_{i\in I}
\left(\unit_{\kk_i}\otimes M_\ell^{(i)}\right)
\]
for a collection $(M_\ell^{(i)})_{\ell\geq 1}$ of operators in $\mathcal{B}(\mm_i)$,
such that the series $\sum_{\ell\ge 1}M_\ell^{(i)*}M_\ell^{(i)}$
strongly convergent for all $i\in I$, and
\[
H=\oplus_{i\in I}\left(K_i\otimes\unit_{\mm_i}
+\unit_{\kk_i}\otimes M_0^{(i)}\right)
\]
for  self-adjoint operators $K_i\in\mathcal{B}(\kk_i)$
and $M_0^{(i)}\in\mathcal{B}(\mm_i)$, $i\in I$,
\item  {defining on $\mathcal{B}(\mm_i)$ the GKSL generator $\Ll^{\mm_i}$ associated with operators $\{M_0^{(i)}, M_\ell^{(i)})\,:\,\ell^{(i)}\geq 1\}$, we have $$\T_t(x_i\otimes y_i)=\hbox{\rm e}^{\mi t K_i }x_i \hbox{\rm e}^{-\mi t K_i }\otimes\T^{\mm_i}(y_i)$$
for all $t\geq 0$, $x_i\in\mathcal{B}(\kk_i)$ and $y_i\in\mathcal{B}(\mm_i)$, where $\T^{\mm_i}$ is the QMS generated by $\Ll^{\mm_i}$, }
\item if there exists a faithful normal invariant state, then the QMS $\T^{\mm_i}$ is irreducible, possesses a unique invariant state $\tau_{\mm_i}$ which is also faithful, and we have $\mathcal{F}(\T^{\mm_i})=\mathcal{N}(\T^{\mm_i})=\mathbb{C}\unit_{\mm_i}$. Moreover, for all $i\in I$, $K_i$ has pure point spectrum.
\end{enumerate}
\end{theorem}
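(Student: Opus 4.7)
My plan is to read the equivalence in the theorem's first sentence directly from formulas (\ref{eq:NT-decomp})--(\ref{eq:UnT}), which were assembled just before its statement, and then to derive items 1--3 from Proposition \ref{prop-struct-NT} combined with the commutant theory of type-I tensor factors and substitution into the GKSL formula. The implication ``$\nT$ atomic $\Rightarrow$ tensor decomposition'' is exactly the content of (\ref{eq:NT-decomp})--(\ref{eq:UnT}), while the reverse direction is immediate because $\bigoplus_i\mathcal{B}(\kk_i)\otimes\unit_{\mm_i}$ is manifestly atomic, with pairwise orthogonal minimal central projections $p_i$ summing to the identity.

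For item 1, working in the realization $\nT=\bigoplus_i\mathcal{B}(\kk_i)\otimes\unit_{\mm_i}$, Proposition \ref{prop-struct-NT}(2) places each $L_\ell$ in $\nT'$. Since every $p_i$ belongs to $\nT$, it commutes with each $L_\ell$, so $L_\ell=\bigoplus_i p_iL_\ell p_i$; and within $\mathcal{B}(\kk_i\otimes\mm_i)$ the commutant of $\mathcal{B}(\kk_i)\otimes\unit_{\mm_i}$ is $\unit_{\kk_i}\otimes\mathcal{B}(\mm_i)$, forcing $p_iL_\ell p_i=\unit_{\kk_i}\otimes M_\ell^{(i)}$. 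The Hamiltonian $H$ is more delicate because it need not lie in $\nT'$; the route I take is through Proposition \ref{prop-struct-NT}(3), which identifies $\T_t|_{\nT}$ with $\mathrm{Ad}(e^{\mi tH})$. This automorphism preserves the center of $\nT$ and, by norm continuity combined with minimality of each $p_i$, must fix every $p_i$; hence $H$ is block-diagonal, $H=\bigoplus_i H_i$. Each $\mathrm{Ad}(e^{\mi tH_i})$ is an automorphism of the type-I factor $\mathcal{B}(\kk_i)\otimes\unit_{\mm_i}$ and, by duality, of its commutant $\unit_{\kk_i}\otimes\mathcal{B}(\mm_i)$; the standard tensor-factor splitting of such automorphisms yields $e^{\mi tH_i}=V_i(t)\otimes W_i(t)$ up to a scalar phase, and differentiating at $t=0$ and absorbing the phase into the additive ambiguity of $K_i$ and $M_0^{(i)}$ produces the stated form.

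Item 2 then follows by substituting these decompositions into the GKSL formula (\ref{eq:GKSL}): on an elementary tensor $x_i\otimes y_i$, the Hamiltonian commutator splits as $\mi[K_i,x_i]\otimes y_i+x_i\otimes\mi[M_0^{(i)},y_i]$ and each dissipative term acts only on the $\mm_i$ factor, so $\Ll$ decomposes on the $i$-th block as a Hamiltonian derivation in $\mathcal{B}(\kk_i)$ plus the GKSL generator $\Ll^{\mm_i}$ in $\mathcal{B}(\mm_i)$, and exponentiation gives the product form. For item 3, a faithful normal invariant state $\sigma$ produces on each block a faithful invariant $\sigma_i=p_i\sigma p_i$ on $\mathcal{B}(\kk_i\otimes\mm_i)$, whose partial traces $\rho_{\kk_i}$ and $\tau_{\mm_i}$ are faithful normal invariant states for $\mathrm{Ad}(e^{\mi tK_i})$ and $\T^{\mm_i}$, respectively. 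To see $\mathcal{N}(\T^{\mm_i})=\mathbb{C}\unit_{\mm_i}$ I argue by contradiction: any non-scalar $y\in\mathcal{N}(\T^{\mm_i})$ would make $\unit_{\kk_i}\otimes y$ an element of the $i$-th summand of $\nT$, which equals $\mathcal{B}(\kk_i)\otimes\unit_{\mm_i}$---impossible. Applying Proposition \ref{prop-struct-NT}(4) to $\T^{\mm_i}$ with faithful invariant $\tau_{\mm_i}$ gives $\mathcal{F}(\T^{\mm_i})\subset\mathcal{N}(\T^{\mm_i})=\mathbb{C}\unit_{\mm_i}$, so $\tau_{\mm_i}$ is the unique invariant state and $\T^{\mm_i}$ is irreducible. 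Finally, since $\rho_{\kk_i}$ is a faithful trace-class positive operator commuting with $K_i$, its eigenspaces are finite-dimensional, $K_i$-invariant and span $\kk_i$, whence $K_i$ has pure point spectrum.

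The only real obstacle is the treatment of $H$: because it does not a priori belong to $\nT'$, one must pass through the one-parameter group of automorphisms $\mathrm{Ad}(e^{\mi tH})$ that it implements and invoke the splitting of type-I tensor-factor automorphisms into tensor products of automorphisms of the individual factors (up to a scalar), with the scalar ambiguity absorbed into the additive freedom of $K_i$ and $M_0^{(i)}$. Everything else is a mechanical consequence of the GKSL representation and the commutant theory of type-I factors.
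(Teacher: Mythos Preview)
Your proposal is correct. The paper's own proof is much terser than your sketch: it argues only the converse of the first sentence directly (exactly as you do, by exhibiting the $p_i$ as mutually orthogonal minimal central projections summing to $\unit$ with $p_i\nT p_i$ a type I factor) and defers the forward implication together with all of items 1--3 to Theorems 3.2, 4.1, Proposition 4.3 and Lemma 4.2 of \cite{DFSU}. What you have written is effectively a self-contained reconstruction of those cited results: the commutant argument placing each $L_\ell$ in $\unit_{\kk_i}\otimes\mathcal{B}(\mm_i)$ via Proposition \ref{prop-struct-NT}(2), the automorphism-splitting for $H$ via Proposition \ref{prop-struct-NT}(3) and the tensor-factorisation of inner automorphisms of a type I factor, the direct substitution into \eqref{eq:GKSL} for item 2, and the partial-trace/faithful-state arguments for item 3 (including the nice observation that a faithful trace-class operator commuting with $K_i$ forces pure point spectrum). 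Your route is the natural one and matches in spirit what the paper outsources; you simply supply the details that the present paper omits by citation.
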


\begin{proof} The proof of the necessary condition is given in Theorems 3.2 and 4.1,
Proposition 4.3 and Lemma 4.2 in \cite{DFSU}. Conversely, given two countable sequences
of Hilbert spaces $(\kk_i)_i$, $(\mm_i)_i$, such that $\h=\oplus_{i\in I}(\kk_i\otimes \mm_i)$
up to a unitary operator, and $\nT=\oplus_{i\in I}\left(\mathcal{B}(\kk_i)\otimes\unit_{\mm_i}\right)$ (up to the corresponding isometric isomorphism), set $p_i$ the orthogonal projection onto $\kk_i\otimes\mm_i$. Then we obtain a family $(p_i)_{i\in I}$ of mutually orthogonal non-zero
projections, which are minimal projections in the center of $\nT$, such that
$\sum_{i\in I} p_i =\unit$ and each von Neumann algebra $ p_i \nT p_i=\mathcal{B}(\kk_i)\otimes\unit_{\mm_i} $ is a type I factor.

\end{proof}

Now, we recall a characterization of the atomicity of a von Neumann algebra in terms of the
existence of a normal conditional expectation, i.e. a weakly*-continuous norm one projection
(see \cite{Tom3} Theorem $5$).\\
To this end we will use that, given $x\in \h$ and $\kk$ complex separable Hilbert spaces and
$\sigma$ a normal state on $\kk$, there always exists (see e.g. Exercise $16.10$ in \cite{Partha})
a normal completely positive linear map $\E_\sigma:\mathcal{B}(\h\otimes\kk)\to\mathcal{B}(\h)$ satisfying
\begin{equation}\label{def-attesa-cond}
\tr{\E_\sigma(X)\eta}=\tr{X(\eta\otimes\sigma)}\qquad\forall\,X\in\mathcal{B}(\h\otimes\kk),\ \eta\in\mathfrak{I}(\h).
\end{equation}
Since $\E_\sigma$ is positive and $\E_\sigma\circ\E_\sigma=\E_\sigma$, it is a normal conditional expectation called \emph{$\sigma$-conditional expectation}.

\begin{theorem}[Tomiyama]\label{Tomi}
Let $\M$ be a von Neumann algebra acting on the Hilbert space $\h$. Then $\M$ is atomic if and
only if $\M$ is the image of a normal conditional expectation $\E:\B\to\M$.
\end{theorem}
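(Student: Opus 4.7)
The plan is to prove the two implications separately.

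For atomicity implying the existence of $\E$, I would use the decomposition recalled in \eqref{eq:UnT}: after applying a unitary $U:\h\to\oplus_{i\in I}(\kk_i\otimes\mm_i)$, the algebra $U\M U^*$ coincides with $\oplus_{i\in I}(\mathcal{B}(\kk_i)\otimes\unit_{\mm_i})$. For each $i\in I$ I would fix a normal state $\sigma_i$ on $\mathcal{B}(\mm_i)$ and invoke the $\sigma_i$-conditional expectation $\E_{\sigma_i}:\mathcal{B}(\kk_i\otimes\mm_i)\to\mathcal{B}(\kk_i)$ provided by \eqref{def-attesa-cond}. Denoting by $p_i$ the orthogonal projection onto the summand $\kk_i\otimes\mm_i$, the natural candidate (pulled back through $U$) is
\[
\E(x)=\sum_{i\in I}\left(\E_{\sigma_i}(p_ixp_i)\otimes\unit_{\mm_i}\right).
\]
The verifications that $\E$ is normal, completely positive, unital, idempotent, has range $\M$ and has norm one all follow routinely from the corresponding properties of each $\E_{\sigma_i}$ together with the mutual orthogonality of the projections $p_i$.

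For the converse, the strategy begins by passing to the predual. Surjectivity of $\E$ combined with $\|\E\|=1$ makes the predual map $\E_*:\M_*\to\B_*\cong\mathfrak{I}(\h)$ an isometric, positivity-preserving embedding, so that every normal state $\omega\in\M_*$ admits a density $\rho_\omega=\E_*(\omega)\in\mathfrak{I}(\h)_+$ with $\omega(x)=\tr{\rho_\omega x}$ for all $x\in\M$. A first reduction exploits the bimodule identity $\E(zxz)=z\E(x)z$ valid for $z\in\mathcal{Z}(\M)$: compression by any central projection yields a normal conditional expectation from $z\mathcal{B}(\h)z$ onto $z\M z$, so it suffices to treat the factor case and then reassemble through an atomic decomposition of $\mathcal{Z}(\M)$.

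The main obstacle lies in showing that a factor $\M$ admitting a normal conditional expectation from $\B$ must be of type I, equivalently, must contain a nonzero minimal projection. The strategy I would pursue is to pick a rank-one projection $p=|\xi\rangle\langle\xi|$ on $\h$ with $a:=\E(p)\in\M_+$ nonzero, and to analyze the unitary orbit $\{uau^*:u\in\mathcal{U}(\M)\}=\{\E(upu^*):u\in\mathcal{U}(\M)\}$ via a Dixmier-type averaging argument, making essential use of the fact (produced above) that all normal states of $\M$ have densities in $\mathfrak{I}(\h)$. This should yield a nonzero spectral projection of $a$ which is minimal in $\M$. The delicate step, which I expect to be the real work, is to control the averaging so that the extracted projection is minimal \emph{in} $\M$ and not merely in $\B$; once this is in hand, exhaustion via a Zorn's-lemma argument on maximal pairwise orthogonal families of such projections completes the factor case, and the central reduction above then concludes the proof.
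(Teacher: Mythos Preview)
Your forward implication (atomic $\Rightarrow$ existence of $\E$) is exactly the paper's construction: decompose $\M$ via \eqref{eq:UnT}, choose normal states $\sigma_i$ on each $\mm_i$, and assemble $\E$ from the partial traces $\E_{\sigma_i}$. Nothing to add there.

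For the converse, the paper does not argue directly: it simply invokes Proposition~3 and Lemma~5 of \cite{Hayes}, using that $\B$ is atomic and semifinite. Your attempt to give a self-contained proof is more ambitious, but it has two genuine gaps.

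First, the central reduction is circular as stated. You write that it suffices to treat the factor case ``and then reassemble through an atomic decomposition of $\mathcal{Z}(\M)$''. But the existence of such a decomposition---a family of mutually orthogonal \emph{minimal} central projections summing to $\unit$---is precisely the statement that $\mathcal{Z}(\M)$ is atomic, and that is part of what you are trying to prove. The bimodule identity $\E(zxz)=z\E(x)z$ does give you a normal conditional expectation onto $z\M z$ for each central projection $z$, but it does not produce minimal central projections for you. You would need a separate argument showing that the abelian algebra $\mathcal{Z}(\M)$ is atomic (e.g.\ by exhibiting a normal conditional expectation from $\B$ onto $\mathcal{Z}(\M)$, which does not follow immediately from the one onto $\M$).

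Second, even granting the reduction to factors, the Dixmier-type averaging step is only a hope, not an argument. You correctly identify the difficulty---ensuring that the extracted projection is minimal in $\M$ rather than merely in $\B$---but you do not resolve it, and it is not clear that averaging $\E(p)$ over $\mathcal{U}(\M)$ leads anywhere useful without further input (for instance, in a type II$_1$ factor the Dixmier averages converge to scalars and no minimal projection appears). The observation that every normal state on $\M$ has a trace-class density via $\E_*$ is genuinely useful and does rule out type III, but the passage from there to type I still needs a real argument.
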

\begin{proof} If there exists a normal conditional expectation $\E:\B\to\M$ onto $\M$, then $\M$
is atomic by Proposition $3$ and Lemma $5$ in \cite{Hayes}, being $\B$ atomic and semifinite.

On the other hand, if $\M$ is atomic, let $(p_i)_i$ be a sequence of orthogonal projections in
the center of $\M$ such that $p_i\M p_i$ is a type $I$ factor; moreover, assume $p_i\M p_i\simeq\mathcal{B}(\kk_i)\otimes\unit_{\mm_i}$ with $(\kk_i)_i$ and $(\mm_i)_i$ be two sequences
of complex and separable Hilbert spaces such that $\h\simeq\oplus_i\left(\kk_i\otimes\mm_i\right)$.
Set $(\sigma_i)_i$ a sequence of normal states on $(\mm_i)_i$, define the normal conditional expectations $\pi_i:\mathcal{B}(\kk_i\otimes\mm_i)\to\mathcal{B}(\kk_i)\otimes\unit_{\mm_i}$ as $\pi_i(x)=\E_{\sigma_i}(x)\otimes\unit_{\mm_i}$ for all $x\in\mathcal{B}(\kk_i\otimes\mm_i)$. So,
$$
\pi(x):=\sum_i\pi(x_{ii})\quad
\mbox{for $x=(x_{ij})_{ij}\in\mathcal{B}(\oplus_i\left(\kk_i\otimes\mm_i\right))\simeq\B$}
$$
gives a normal conditional expectation onto $\M$.
\end{proof}

\begin{corollary}\label{Tomi+cond-ex}
Let $\M$ be an atomic von Neumann algebra acting on $\h$ and let $\mathcal{N}\subseteq\M$
be a von Neumann subalgebra. If there exists a normal conditional expectation
$\E:\M\to\mathcal{N}$ onto $\mathcal{N}$, then $\mathcal{N}$ is atomic.
\end{corollary}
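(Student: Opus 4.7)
The plan is to reduce the corollary to the Tomiyama-type characterization proved in Theorem \ref{Tomi} by constructing a normal conditional expectation directly from $\B$ onto $\mathcal{N}$.

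First I would invoke Theorem \ref{Tomi} applied to the atomic algebra $\M$: there exists a normal conditional expectation $\pi:\B\to\M$ with range $\M$. Next, I would form the composition $\E\circ\pi:\B\to\mathcal{N}$, and verify the three properties required to apply Theorem \ref{Tomi} in the reverse direction to $\mathcal{N}$. Normality is immediate since both $\pi$ and $\E$ are weakly$^*$-continuous. The projection property is checked on elements $x\in\mathcal{N}\subseteq\M$: here $\pi(x)=x$ because $\pi$ fixes $\M$, and $\E(x)=x$ because $\E$ fixes $\mathcal{N}$, so $(\E\circ\pi)(x)=x$. The norm bound $\|\E\circ\pi\|\leq\|\E\|\cdot\|\pi\|=1$ combined with the fact that $\E\circ\pi$ restricts to the identity on the nonzero subalgebra $\mathcal{N}$ forces $\|\E\circ\pi\|=1$.

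Thus $\E\circ\pi:\B\to\mathcal{N}$ is a normal conditional expectation onto $\mathcal{N}$, and the converse direction of Theorem \ref{Tomi} immediately yields that $\mathcal{N}$ is atomic.

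I do not expect any significant obstacle: the argument is essentially a two-line composition, and Theorem \ref{Tomi} does all the structural work. The only point that deserves explicit mention is that a conditional expectation, by definition, restricts to the identity on its range, which ensures both the projection identity $(\E\circ\pi)^2=\E\circ\pi$ and the norm-one property on $\B$.
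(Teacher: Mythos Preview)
Your proposal is correct and follows essentially the same approach as the paper: both obtain a normal conditional expectation $\mathcal{F}:\B\to\M$ from Theorem \ref{Tomi}, compose it with $\E$, and verify that $\E\circ\mathcal{F}$ is a normal norm-one projection onto $\mathcal{N}$ so that Theorem \ref{Tomi} applies again. The only cosmetic difference is that the paper checks idempotence via $(\E\circ\mathcal{F})^2=\E\circ\mathcal{F}$ using $\mathcal{F}\vert_{\mathcal{N}}=\mathrm{id}$, whereas you equivalently note that $\E\circ\pi$ restricts to the identity on its range $\mathcal{N}$.
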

\begin{proof} By Tomiyama's Theorem we know that, since $\M$ is atomic, it is the image of
a normal conditional expectation $\mathcal{F}:\B\to\M$ (see also the proof of Theorem $5$ in \cite{Tom3}). Therefore, the map $\E\circ\mathcal{F}:\B\to\mathcal{N}$ is a normal conditional expectation onto $\mathcal{N}$.
Indeed, since $\mathcal{N}=\Ran\E$ is contained in $\M=\Ran\mathcal{F}$, for $x\in\B$ we have
\[
(\E\circ\mathcal{F})(\E\circ\mathcal{F})x=\E^2(\mathcal{F}(x))=(\E\circ\mathcal{F})(x),
\]
i.e. $\E\circ\mathcal{F}$ is a projection. Therefore, $\nm\E\circ\mathcal{F}\nm\geq 1$.
On the other hand, since $\E$ and $\mathcal{F}$ are norm one operators, we clearly obtain
$\nm\E\circ\mathcal{F}\nm=1$. The normality of $\E\circ\mathcal{F}$ is evident, and so we
can conclude that the algebra $\mathcal{N}$ is atomic by Tomiyama Theorem.
\end{proof}
\begin{remark}\rm
Theorem \ref{Tomi} is a simplified version of Theorem $5$ in \cite{Tom3}, given in terms of atomicity
of the subalgebra $\M$. Moreover, Corollary \ref{Tomi+cond-ex} generalizes to atomic algebras
one implication of the same theorem. In particular, we give easier proofs of these results.
\end{remark}

In the following we \emph{\underline{assume the existence of a faithful normal invariant} \underline{state}}; note that, in general, this condition is not necessary for the
decoherence-free subalgebra to be atomic. This is always the case for any QMS acting
on a finite dimensional algebra. However, we show the following example that will be useful
later.
\begin{example}\label{ex:atom-no-sif}\rm
Let $\h=\mathbb{C}^3$ with the canonical orthonormal basis $(e_i)_{i=1,2,3}$ and $\B=M_3(\mathbb{C})$. We consider the operator $\Ll$ on $M_3(\mathbb{C})$ given by
\[
\Ll(x)=\mi \omega\,[\ee{1}{1},x]
-\frac{1}{2}\left(\ee{3}{3}x-2\ee{3}{2}x\ee{2}{3}+x\ee{3}{3}\right)
\]
for all $x\in M_3(\mathbb{C})$, with $\omega\in\mathbb{R}$, $\omega\neq 0$. Clearly $\Ll$ is written in the GKSL form
with $H=\omega\ee{1}{1}$ and $L=\ee{2}{3}$, and so it generates a uniformly continuous QMS
$\T=(\T_t)_{t\geq 0}$ on $M_3(\mathbb{C})$.\\
An easy computation shows that any invariant functional has the form
$$a\ee{1}{1}+b\ee{2}{2}$$ for some $a,b\in\mathbb{C}$,
and so the semigroup has no faithful invariant states. \\
Since $[H,L]=0$, by item $2$ of Proposition \ref{prop-struct-NT} we have $\nT=\{L,L^*\}^\prime$,
so that an element $x\in\B$ belongs to $\nT$ if and only if
\[
\left\{\begin{array}{ll}\kb{xe_2}{e_3}=\kb{e_2}{x^*e_3}\\
\kb{xe_3}{e_2}=\kb{e_3}{x^*e_2}
\end{array}\right.,\quad\mbox{i.e.}\quad\left\{\begin{array}{lll}xe_2=x_{33}e_2\\
xe_3=x_{22}e_3\\
x_{31}=x_{21}=0
\end{array}\right..
\]
Therefore we get
\[
\nT=\left\{\,x_{11}\kb{e_1}{e_1}
+x_{22}\left(\kb{e_2}{e_2}+\kb{e_3}{e_3}\right)\,\mid\,x_{11}, x_{22}\in\mathbb{C}\,\right\},
\]
{i.e. $\nT$ is isometrically isomorphic to the atomic algebra $\mathbb{C}\oplus\mathbb{C}p$, where $p$ denotes the identity matrix in $M_2(\mathbb{C})$.}
\end{example}

\section{Atomicity of $\nT$ and decoherence}
\label{sect:at-dec}
In this section, we explore the relationships between the atomicity of $\nT$ and
the property of environmental decoherence, under the assumption of the existence of a
faithful normal invariant state $\rho$.
Following \cite{CSU-new}, we say that there is \emph{environment induced decoherence}
(EID) on the open system described by $\T$ if there exists a $\T_t$-invariant and
$*$-invariant weakly$^*$ closed subspace $\M_2$ of $\B$ such that:
 \begin{itemize}
 \item[(EID1)] $\B=\nT\oplus\M_2$ with $\M_2\not=\{0\}$,
 \item[(EID2)] $w^*-\lim_{t\to\infty}\T_t(x)=0$ for all $x\in\M_2$.
 \end{itemize}
Unfortunately, if EID holds and $\h$ is infinite-dimensional, it is not clear if the space $\M_2$ is uniquely determined. However,
 $\mathcal{M}_2$ is always contained in the $\T$-invariant and
$*$-invariant closed subspace
$$
\M_0=\left\{\, x\in\B\,:\,w^*-\lim_{t\to\infty}\T_t(x)=0\,\right\}.
$$
In \cite{DFSU} we showed that, if $\nT$ is atomic, then EID holds (see Theorem 5.1) and,
in particular, $\nT$ is the image of a normal conditional expectation $\E:\B\to\nT$ compatible
with the faithful state $\rho$ (i.e. $\rho\circ\E=\rho$) and such that {\begin{equation}\label{M2=Ker}{\Ker}\,\E=\M_2=\left\{\, x\in\B\,:\,\tr{\rho xy}=0\ \ \forall\, y\in\nT\,\right\}.\end{equation}}
(see Theorem $19$ in \cite{CSU-new}). In the following we will show that, if $\nT$ is atomic,
the decomposition unique, i.e. the only way to realize it, is taking $\M_2$ given by \eqref{M2=Ker}
(see Theorem \ref{equiv-NT-atom} and Remark \ref{univ-det}.$2$).\\ Moreover, in this case, we will
study the relationships of such a decomposition with another famous asymptotic splitting of $\B$,
called the \emph{Jacobs-de Leeuw-Glickberg splitting}: this comparison is very natural since
the decomposition $\B=\nT\oplus\M_2$ is clearly related too to the asymptotic properties of
the semigroup.

We recall that, since there exists $\rho$ faithful invariant, the Jacobs-de Leeuw-Glickberg
splitting holds (see e.g. Corollary $3.3$ and Proposition $3.3$ in \cite{Hellmich}) and is
given by $\B=\mathfrak{M}_r\oplus\mathfrak{M}_s$ with
\begin{align}
\mathfrak{M}_r&:=\overline{\spanno}^{w^*}\{x\in\B\,:\,
\T_t(x)=e^{\mi t\lambda}x\ \mbox{for some $\lambda\in\mathbb{R}$},\ \forall\,t\geq 0\}\\
\mathfrak{M}_s&:=\{x\in\B\,:\,0\in\overline{\{\T_t(x)\}}^{w^*}_{t\geq 0}\}.
\end{align}
{Moreover, in this case $\mathfrak{M}_r$ is a von Neumann algebra.}

The relationship between the decomposition induced by decoherence and the Jacobs-de Leeuw-Glickberg splitting is given by the following result
(see Proposition 31 in \cite{CSU-new}).
\begin{proposition}\label{EID=JDG}
If there exists a faithful normal invariant state $\rho$, then the following conditions
are equivalent:
\begin{enumerate}
\item EID holds with $\M_2=\M_0$ and the induced decomposition coincides with the Jacobs-de Leeuw-Glickberg splitting,
\item $\nT\cap\mathfrak{M}_s=\{0\}$,
\item $\nT=\mathfrak{M}_r$.
\end{enumerate}
Moreover, if one of the previous conditions holds, then $\nT$ is the image of a normal conditional
expectation $\E$ compatible with $\rho$ and such that ${\Ker}\,\E=\M_0=\mathfrak{M}_s$.
\end{proposition}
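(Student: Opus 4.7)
The plan is to prove the cycle $(3)\Rightarrow(1)\Rightarrow(2)\Rightarrow(3)$, pivoting on one a priori inclusion that uses only faithfulness of $\rho$, and then to read off the moreover statement from the JDG projection.

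First I would establish the key a priori fact $\mathfrak{M}_r\subseteq\nT$, valid whenever a faithful normal invariant state $\rho$ exists. If $\T_t(x)=e^{\mi t\lambda}x$ for some $\lambda\in\R$ and all $t\ge 0$, the Schwarz inequality for the unital CP map $\T_t$ gives $\T_t(x^*x)\ge \T_t(x)^*\T_t(x)=x^*x$. Pairing with $\rho$ and using invariance yields $\tr{\rho\bigl(\T_t(x^*x)-x^*x\bigr)}=0$; faithfulness then forces equality of the two positive operators, and the symmetric identity for $xx^*$ is obtained the same way. Hence every modulus-one eigenvector of the semigroup lies in $\nT$, and since $\nT$ is $w^*$-closed, the $w^*$-closed span $\mathfrak{M}_r$ is contained in $\nT$.

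With this inclusion in hand, $(2)\Leftrightarrow(3)$ becomes a one-line observation. For any $x\in\nT$, its JDG decomposition $x=x_r+x_s$ satisfies $x_r\in\mathfrak{M}_r\subseteq\nT$, so $x_s=x-x_r\in\nT\cap\mathfrak{M}_s$, yielding the refined direct sum $\nT=\mathfrak{M}_r\oplus(\nT\cap\mathfrak{M}_s)$; the equivalence is then immediate.

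For $(3)\Rightarrow(1)$ I would set $\M_2:=\mathfrak{M}_s$, which the JDG setup already guarantees to be $w^*$-closed, $*$-invariant, $\T_t$-invariant, and to realize $\B=\nT\oplus\M_2$ nontrivially; the only substantive item is the identification $\mathfrak{M}_s=\M_0$. The inclusion $\M_0\subseteq\mathfrak{M}_s$ is trivial, while the converse rests on the classical fact, available under a faithful normal invariant state, that every orbit on the stable component actually $w^*$-converges to $0$ (recorded in \cite{Hellmich} and reused in \cite{CSU-new}). The reverse implication $(1)\Rightarrow(3)$ then follows from uniqueness of the reversible component in the JDG splitting: if $\B=\nT\oplus\M_0$ coincides with JDG, then $\nT$ must be $\mathfrak{M}_r$. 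For the moreover clause, the canonical algebraic projection $\E\colon\B=\nT\oplus\mathfrak{M}_s\to\nT$ is $w^*$-continuous, unital, and of norm one, so Tomiyama's theorem identifies it as a conditional expectation onto $\nT$; compatibility $\rho\circ\E=\rho$ follows because each $y\in\mathfrak{M}_s=\M_0$ satisfies $\rho(y)=\rho(\T_t(y))\to 0$, whence $\rho$ annihilates $\Ker\E$.

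The main obstacle is the upgrade from the cluster-point definition of $\mathfrak{M}_s$ to the genuine $w^*$-convergence characterizing $\M_0$. This is the only truly analytic step and is precisely where faithful normal invariance of $\rho$ is essential; without it, orbits on the stable part could merely oscillate through $0$ rather than converge to it, and the decomposition induced by decoherence need not coincide with the JDG splitting.
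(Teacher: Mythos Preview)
The paper does not give its own proof of this proposition: it is quoted as Proposition~31 of \cite{CSU-new} and stated without argument. So there is no in-paper proof to compare against, and your outline is essentially a reconstruction of what the proof in \cite{CSU-new} presumably looks like. The ingredients you use---the inclusion $\mathfrak{M}_r\subseteq\nT$ via Schwarz plus faithfulness of $\rho$, the refined splitting $\nT=\mathfrak{M}_r\oplus(\nT\cap\mathfrak{M}_s)$ giving $(2)\Leftrightarrow(3)$, and the identification of the JDG projection as a normal $\rho$-compatible conditional expectation for the moreover clause---are all standard and correct.

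The one place I would push back is your assertion that, \emph{under a faithful normal invariant state alone}, one always has $\mathfrak{M}_s=\M_0$, i.e.\ that on the stable component every orbit genuinely $w^*$-converges to $0$ rather than merely clustering there. You flag this as the only analytic step and defer to \cite{Hellmich} and \cite{CSU-new}, but you should be careful: the JDG machinery by itself gives only the cluster-point statement, and the upgrade to convergence is not automatic. The paper itself is cautious on this point (see Remark~\ref{univ-det}, item~4, which leaves open the possibility $\M_2\subsetneq\M_0$ in related situations), and in your argument for $(3)\Rightarrow(1)$ you actually have condition~(3) available, so if the unconditional claim is not in the literature you may need to use $\nT=\mathfrak{M}_r$ explicitly to get the convergence. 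If you are writing this up independently rather than citing \cite{CSU-new} wholesale, this is the step that requires either a precise pointer to the relevant result in \cite{Hellmich}/\cite{CSU-new} or a self-contained argument.
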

Clearly, if $\mathfrak{M}_r$ is not an algebra, it does not make sense to pose the problem to
understand if it coincides with $\nT$. In particular, this could happen when $\T$ has no
faithful invariant states, as the following example shows.

\begin{example}\label{ex:Mr}\rm
Let us consider the uniformly continuous QMS $\T$ on $M_3(\mathbb{C})$ defined in Example \ref{ex:atom-no-sif}. We have already seen that $\T$ does not posses faithful invariant states, and
\[
\nT=\left\{\,x_{11}\kb{e_1}{e_1}
+x_{22}\left(\kb{e_2}{e_2}+\kb{e_3}{e_3}\right)\,\mid\,x_{11}, x_{22}\in\mathbb{C}\,\right\}.
\]
We want now to find the space $\mathfrak{M}_r$, generated by eigenvectors of $\Ll$ corresponding to purely imaginary eigenvalues. Easy computations show that we have $\Ll(x)=\mi\lambda x$ for some $\lambda\in\mathbb{R}$ if and only if
\begin{align*}\mi\lambda\sum_{i,j=1}^3x_{ij}\ee{i}{j}&=\mi \omega\sum_{j=1}^3\left(x_{1j}\ee{1}{j}-x_{j1}\ee{j}{1}\right)\\ &-\frac{1}{2}\left(\sum_{j=1}^3x_{3j}\ee{3}{j}-2x_{22}\ee{3}{3}+\sum_{i=1}^3x_{i3}\ee{i}{3}\right),
\end{align*}
i.e., in the case when $\lambda=0$, $x_{ij}=0$ for $i\not=j$ and $x_{22}=x_{33}$, and,
 in the case $\lambda\not=0$, if and only if the following identities hold
\[
\begin{array}{ccc}
x_{11}=0,  & x_{22}=0, & x_{12}(\omega-\lambda)=0,\\
x_{13}\left(-\frac{1}{2}+\mi(\omega-\lambda)\right)=0
& x_{21}(\omega+\lambda)=0, & x_{23}(\frac{1}{2}+\mi\lambda)=0,\\
x_{31}\left(\frac{1}{2}+\mi(\omega+\lambda)\right)=0, &
x_{32}(\frac{1}{2}+\mi\lambda)=0, & x_{33}(1+\mi\lambda)=x_{22}.
\end{array}
\]
Since $\omega$ and $\lambda$ belong to $\mathbb{R}$ this is equivalent to have either $x=x_{12}\ee{1}{2}$ and $\lambda=\omega$, or $x=x_{21}\ee{2}{1}$ and $\lambda=-\omega$.
Therefore, we can conclude that
\[
\mathfrak{M}_r
=\left\{\left(\begin{array}{ccc}
x_{11} & x_{12} & 0 \\
x_{21} & x_{22} & 0 \\
0 & 0 & x_{22}
\end{array}\right)\,:\, x_{11}, x_{22}, x_{12}, x_{21}\in\mathbb{C}\right\}.
\]
In particular, $\mathfrak{M}_r$ is not an algebra and it is strictly bigger that $\nT$.
\end{example}
\begin{remarks}\rm $1.$ Note that if $\mathfrak{M}_r$ is contained in $\nT$, then it
is a $*$-algebra.\\
Indeed, if $\mathfrak{M}_r\subseteq\nT$, taken $x,y\in\mathfrak{M}_r$ such that
$\T_t(x)=e^{\mi\lambda t}x$ and $\T_t(y)=e^{\mi\mu t}y$ for some $\lambda,\mu\in\mathbb{R}$
and any $t$, by property $3$ in Proposition \ref{prop-struct-NT} we have
$$\T_t(x^*y)=\T_t(x)^*\T_t(y)=e^{\mi t(\mu-\lambda)}x^*y\qquad\forall\,t\geq 0.$$
As a consequence $x^*y$ belongs to $\mathfrak{M}_r$.

$2.$ If $\h$ is finite-dimensional, then also the opposite implication is true.\\
Indeed, if $\mathfrak{M}_r$ is a $*$-algebra, given $x\in\B$ such that
$\T_t(x)=e^{it\lambda}x$, $\lambda\in\mathbb{R}$, we have $\T_t(x^*)\T_t(x)=x^*x$.
Then, by the Schwarz inequality, $\T_t(x^*x)\geq x^*x$ for all $t\geq 0$. Set $\T^r_t:=\T_{t|_{\mathfrak{M}_r}}$. Since in this case $\T$ is a strongly continuous
semigroup, by definition of $\mathfrak{M}_r$ and by Corollary $2.9$, Chapter V, of
\cite{EN}, the strong operator closure of $\{\T^r_t\,:\, t\geq 0\}$ is a compact
topological group of operators in $\mathcal{B}(\mathfrak{M}_r)$. Hence, $(\T^r_t)^{-1}$
is the limit of some net $(\T^r_{t_\alpha})_\alpha$ and so $(\T^r_t)^{-1}$ is a positive
operator. Since $x^*x\in \mathfrak{M}_r$, for all $\alpha$ we have
$\T^r_{t_\alpha}(x^*x)\geq x^*x$, and so $(\T^r_t)^{-1}(x^*x)\geq x^*x$. On the other hand,
$$
(\T^r_t)^{-1}(\T^r_t(x^*x))=x^*x\geq(\T^r_t)^{-1}(x^*x).
$$
Therefore, $(\T^r_t)^{-1}(x^*x)=x^*x$ and this implies $\T_t(x^*x)=x^*x=\T_t(x)^*\T_t(x)$.
Similarly we can prove the equality $\T_t(xx^*)=xx^*=\T_t(x)\T_t(x)^*$, and so $x$ belongs to $\nT$.
\end{remarks}

Now we are able to prove one of the central results of this paper.
\begin{theorem}\label{equiv-NT-atom}
Assume that there exists a faithful normal invariant state $\rho$. Then $\nT$ is atomic
if and only if EID holds with $\nT=\mathfrak{M}_r$ and $\M_2=\M_0$.
\end{theorem}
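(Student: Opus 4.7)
The plan is to reduce both directions of the equivalence to Proposition~\ref{EID=JDG} together with Tomiyama's Theorem~\ref{Tomi}. A preliminary observation is that the stated hypothesis \lq\lq EID holds with $\nT=\mathfrak{M}_r$ and $\M_2=\M_0$\rq\rq{} coincides with item~(1) of Proposition~\ref{EID=JDG}: indeed $\M_0\subseteq\mathfrak{M}_s$ always, so the two direct-sum complements of $\nT=\mathfrak{M}_r$ appearing in $\B=\nT\oplus\M_0$ and $\B=\mathfrak{M}_r\oplus\mathfrak{M}_s$ must agree, and the decomposition induced by decoherence coincides with the Jacobs--de~Leeuw--Glickberg splitting.

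For the $(\Leftarrow)$ direction I then just apply the concluding statement of Proposition~\ref{EID=JDG}, which produces a normal conditional expectation $\E:\B\to\nT$ with range $\nT$; Tomiyama's Theorem~\ref{Tomi} immediately yields atomicity of $\nT$.

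The bulk of the work lies in the $(\Rightarrow)$ direction, where by the above observation it suffices to prove $\nT=\mathfrak{M}_r$. The inclusion $\mathfrak{M}_r\subseteq\nT$ does not need atomicity: for any $x\in\B$ with $\T_t(x)=e^{\mi t\lambda}x$, the Kadison--Schwarz inequality gives $\T_t(x^*x)\geq\T_t(x)^*\T_t(x)=x^*x$, and pairing with the faithful invariant state $\rho$ (which makes the traces of both sides equal) forces $\T_t(x^*x)=x^*x$; the same argument for $xx^*$ places $x$ in $\nT$, and $w^*$-closedness of $\nT$ extends the conclusion to all of $\mathfrak{M}_r$. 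For the reverse inclusion I would exploit the structure delivered by Theorem~\ref{th:main}: write $\h=\oplus_i(\kk_i\otimes\mm_i)$, $\nT=\oplus_i\mathcal{B}(\kk_i)\otimes\unit_{\mm_i}$, and $H=\oplus_i(K_i\otimes\unit_{\mm_i}+\unit_{\kk_i}\otimes M_0^{(i)})$ with each $K_i$ of pure point spectrum. Because $\unit_{\kk_i}\otimes M_0^{(i)}$ commutes with every element of $\mathcal{B}(\kk_i)\otimes\unit_{\mm_i}$, the action of $\T_t$ on $\nT$ reduces to conjugation by $\oplus_i e^{\mi tK_i}\otimes\unit_{\mm_i}$. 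If $(P_n^{(i)})_n$ are the spectral projections of $K_i$ with eigenvalues $\lambda_n^{(i)}$, then for any $x=\oplus_i(x_i\otimes\unit_{\mm_i})\in\nT$ the pieces $P_m^{(i)}x_iP_n^{(i)}\otimes\unit_{\mm_i}$ are $\T$-eigenvectors with purely imaginary eigenvalues $\mi(\lambda_m^{(i)}-\lambda_n^{(i)})$, and their $w^*$-closed linear span exhausts $\nT$. Hence $\nT\subseteq\mathfrak{M}_r$, and Proposition~\ref{EID=JDG} supplies EID together with $\M_2=\M_0$.

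The main obstacle is this final inclusion, which rests on item~(3) of Theorem~\ref{th:main}: it is precisely the pure-point-spectrum property of the $K_i$ (guaranteed by the faithful invariant state) that allows $\nT$ to be exhibited as a $w^*$-closed span of $\T$-eigenvectors with purely imaginary eigenvalues. Everything else is bookkeeping, once item~(1) of Proposition~\ref{EID=JDG} is recognised as a reformulation of the theorem's hypothesis.
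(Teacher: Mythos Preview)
Your proof is correct and rests on the same ingredients as the paper's: Proposition~\ref{EID=JDG}, Tomiyama's Theorem~\ref{Tomi}, and the pure-point spectrum of the $K_i$ from Theorem~\ref{th:main}(3). The only difference is which of the equivalent conditions in Proposition~\ref{EID=JDG} is verified in the forward direction: the paper checks condition~(2), namely $\nT\cap\mathfrak{M}_s=\{0\}$, by testing an element $x\in\nT\cap\mathfrak{M}_s$ against rank-one operators built from eigenvectors of $K_i$; you instead establish condition~(3), $\nT=\mathfrak{M}_r$, directly by exhibiting the blocks $P_m^{(i)}x_iP_n^{(i)}\otimes\unit_{\mm_i}$ as $\T$-eigenvectors and by a self-contained Kadison--Schwarz argument for the inclusion $\mathfrak{M}_r\subseteq\nT$ (the paper obtains this inclusion only implicitly, through the equivalence in Proposition~\ref{EID=JDG}). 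Your route is marginally more constructive and avoids the external citation to \cite{DFSU} for ``EID holds'', since item~(1) of Proposition~\ref{EID=JDG} already contains that statement; otherwise the two arguments are essentially the same.
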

\begin{proof}
If $\nT$ is atomic, then EID holds by Theorem $5.1$ in \cite{DFSU}. It remains to prove that $\nT=\mathfrak{M}_r$ and $\M_2=\M_0$.  The atomicity implies $\nT=\oplus_{i\in I}\left(\mathcal{B}(\kk_i)\otimes\unit_{\mm_i}\right)$
up to a unitary isomorphism. Let $x=\sum_{i\in I}(x_i\otimes\unit_{\mm_i})$
be in $\nT\cap\mathfrak{M}_s$, with $x_i\in\mathcal{B}(\kk_i)$ for every $i\in I$,
and assume $w^*-\lim_\alpha\T_{t_\alpha}(x)=0$. Given $u_i,v_i\in\kk_i$ and $\tau_i$
an arbitrary state on $\mm_i$, by Theorem \ref{th:main}
\begin{equation}\label{Tx}
\tr{(\kb{u_i}{v_i}\otimes\tau_i)\T_{t_\alpha}(x)}
=\scal{v_i}{e^{\mi t_\alpha K_i}x_ie^{-\mi t_\alpha K_i}u_i}.
\end{equation}
Choosing $u_i$ and $v_i$ such that $K_iu_i=\lambda_iu_i$ and $K_iv_i=\mu_iv_i$, $\lambda_i,\mu_i\in\mathbb{R}$, equation \eqref{Tx} becomes $$\tr{(\kb{u_i}{v_i}\otimes\tau_i)\T_{t_\alpha}(x)}=e^{\mi t_\alpha(\mu_i-\lambda_i)}\scal{v_i}{x_iu_i},$$ so that $\scal{v_i}{x_iu_i}=0$, i.e.
$x_i=0$ because the eigenvectors of $K_i$ from an orthonormal basis of $\kk_i$ (see
item $3$ in Theorem \ref{th:main}).
This proves the equality $\nT\cap\mathfrak{M}_s=\{0\}$.\\
So we can conclude thanks to item $3$ of Proposition \ref{EID=JDG}.

Conversely, if EID holds with $\nT=\mathfrak{M}_r$ and $\M_2=\M_0$, by Proposition \ref{EID=JDG}
there exists
a normal conditional expectation $\E:\B\to\nT$ onto $\nT$, compatible to $\rho$. Therefore,
$\nT$ is atomic thanks to Theorem \ref{Tomi}. \end{proof}

\begin{remarks}\label{univ-det}\rm As a consequence of Theorem \ref{equiv-NT-atom} and Proposition \ref{EID=JDG} the following facts hold:\smallskip

$1.$\, $\nT$ is atomic if and only if $\nT\cap\,\mathfrak{M}_s=\{0\}$, if and only if $\nT=\mathfrak{M}_r$, i.e $\nT$ is generated by eigenvectors of $\Ll$ corresponding to purely
imaginary eigenvalues. \\
Moreover, in this case we also have $\nT\cap\M_0=\{0\}$, being $\M_0\subseteq\mathfrak{M}_s$:
this means that, assuming $\nT$ atomic and the existence of a faithful invariant state,
the situation is similar to the finite-dimensional case, i.e $\nT$ does not contain operators
going to $0$ under the action of the semigroup.\smallskip

$2.$\,if $\nT$ is atomic and $\FT=\mathbb{C}\unit$, the semigroup satisfies the following properties given by non-commutative Perron-Frobenius Theorem (see e.g. Propositions $6.1$  and $6.2$ in \cite{batkai}, Theorem $2.5$ in \cite{albev-frobenius}):
\begin{itemize}
\item[-]  the peripheral point spectrum $\sigma_p(\T_t)\cap\mathbb{T}$ of each $\T_t$ is a subgroup of the circle group $\mathbb{T}$,
\item[-] given $t\geq 0$, each peripheral eigenvalue $\alpha$ of $\T_t$ is simple and we have $\sigma_p(\T_t)\cap\mathbb{T}=\alpha (\sigma_p(\T_t)\cap\mathbb{T})$,
\item[-] the restriction of $\rho$ to $\nT$ is a trace.
\end{itemize}
As a consequence, the peripheral point spectrum of each $\T_t$ is the cyclic group of all $h$-roots of unit for some $h\in\mathbb{N}$.\smallskip

$3.$\,  If $\nT$ is atomic, the decomposition induced by decoherence is \emph{uniquely determined}. This fact follows from Proposition $5$ in \cite{CSU-new}, since we have $\nT\cap\mathfrak{M}_0=\{0\}$.\smallskip

$4.$\, Note that Theorem \ref{equiv-NT-atom} does not exclude the possibility to have a QMS $\T$ displaying decoherence with $\nT$ a non-atomic type I algebra. Clearly, in this case, we will get $\nT\supsetneq\mathfrak{M}_r$ or $\M_2\subsetneq\M_0$.

\end{remarks}

\begin{remark}\rm
In \cite{LO} the authors prove that EID holds when the semigroup commutes with the modular group associated with a faithful normal invariant state. However, our result in Theorem \ref{equiv-NT-atom}
is stronger since we find the equivalence between EID and the atomicity of $\nT$, which is a weaker assumption of the commutation with the modular group.
In fact, it can be shown (see \cite{Take-sp-cond}, section $3$) that
commutation with the modular group implies atomicity of $\nT$. Moreover, it is not difficult to
find an example of a QMS on $\B$, with $\h$ finite dimensional which does not commute with the
modular group. Its decoherence-free subalgebra, as any finite dimensional von Neumann algebra,
will be atomic.
\end{remark}

\section{Structure of reversible states}
\label{sect:rev-stat}
In this section, assuming $\nT$ atomic and the existence of a faithful invariant state $\rho$,
we study the structure of \emph{reversible states}, i.e. states belonging to the vector space
\begin{align}\label{RT-T}
\RT:&=\overline{\spanno}\{\sigma\in\mathfrak{I}(\h)\,:\,\T_{*t}(\sigma)
=e^{\mi t\lambda}\sigma\ \mbox{for some $\lambda\in\mathbb{R},\ \forall\,t\geq 0$}\}\\
\label{RT-L}&=\overline{\spanno}\{\sigma\in\mathfrak{I}(\h)\,:\,
\Ll_{*}(\sigma)=\mi\lambda\,\sigma\ \mbox{for some $\lambda\in\mathbb{R}$}\}.
\end{align}
In particular
we will prove that $\RT$ is the predual of the decoherence-free algebra $\nT$.\smallskip

{To this end, we recall the following result which is a version of the Jacobi-De Leeuw-Glicksberg theorem for strongly continuous semigroup (see Propositions $3.1, 3.2$ in \cite{Kummerer} and Theorem $2.8$ in \cite{EN}).
\begin{theorem}\label{kummerer}
If there exists a normal density $\rho\in\mathfrak{I}(\h)$ satisfying
\begin{equation}\label{disug-inv}\tr{\rho\left(\T_t(x)^*\T_t(x)\right)}\leq\tr{x^*x}\quad\quad\forall\,x\in\B,\ t\geq 0,\end{equation}
then we can decompose $\mathfrak{I}(\h)$ as
\begin{equation}\label{dec-J}
\mathfrak{I}(\h)
=\RT\oplus\{\sigma\in\mathfrak{I}(\h)\,:\,0\in\overline{\{\T_{*t}(\sigma)\}}^w_{t\geq 0}\}.
\end{equation}
\end{theorem}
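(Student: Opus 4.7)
The plan is to reduce the theorem to the classical Jacobs--de Leeuw--Glicksberg splitting for bounded strongly continuous semigroups with relatively weakly compact orbits (Chapter V, Theorem 2.8 in \cite{EN}), applied to the predual semigroup $\T_{*}=(\T_{*t})_{t\geq 0}$ on $\mathfrak{I}(\h)$. That theorem produces a decomposition $\mathfrak{I}(\h)=X_r\oplus X_s$ where $X_r$ is the closed linear span of eigenvectors for unimodular eigenvalues and $X_s$ is the set of $\sigma$ whose orbit has $0$ as a weak cluster point; by the very definitions these equal $\RT$ and the stable subspace appearing in the statement. Two things must be verified: (a)~$\T_{*}$ is a bounded $C_0$-semigroup on $\mathfrak{I}(\h)$, and (b)~each orbit $\{\T_{*t}(\sigma):t\geq 0\}$ is relatively weakly compact in $\mathfrak{I}(\h)$.

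Part (a) is routine. Each $\T_t$ is a unital completely positive contraction on $\B$, so its predual $\T_{*t}$ is a positive trace-preserving contraction on $\mathfrak{I}(\h)$. Uniform continuity of $\T$ transfers to strong continuity of $\T_*$ via the estimate $\|\T_{*t}(\sigma)-\sigma\|_1=\sup_{\|x\|\leq 1}|\tr{\sigma(\T_t(x)-x)}|\leq\|\sigma\|_1\sup_{\|x\|\leq 1}\|\T_t(x)-x\|$.

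The crux is part (b), where the hypothesis enters. I would exploit it via the Hilbert--Schmidt space $\mathfrak{I}_2(\h)$ with inner product $\langle a,b\rangle_{HS}=\tr{a^*b}$, defining
\[
J_t:\mathfrak{I}_2(\h)\longrightarrow\mathfrak{I}_2(\h),\qquad J_t(x):=\T_t(x)\,\rho^{1/2},
\]
which is well defined because $\rho^{1/2}\in\mathfrak{I}_2(\h)$. By cyclicity of the trace the hypothesis reads
\[
\|J_t(x)\|_{HS}^2=\tr{\rho\,\T_t(x)^*\T_t(x)}\leq\tr{x^*x}=\|x\|_{HS}^2,
\]
so $J_t$ is an HS-contraction. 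A direct duality computation (using cyclicity and the defining identity of the predual) gives the intertwining
\[
\T_{*t}(\rho^{1/2}z)=\bigl(J_t^*(z^*)\bigr)^*\qquad\forall\,z\in\mathfrak{I}_2(\h),
\]
where $J_t^*$ is the Hilbert-space adjoint on $\mathfrak{I}_2(\h)$. Consequently, for $\sigma=\rho^{1/2}z$ the orbit $\{\T_{*t}(\sigma)\}$ is HS-norm bounded by $\|z\|_{HS}$, hence relatively compact for the weak topology of the Hilbert space $\mathfrak{I}_2(\h)$. Since $\rho$ is faithful, $\rho^{1/2}\mathfrak{I}_2(\h)$ is trace-norm dense in $\mathfrak{I}(\h)$.

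The hard part will be to upgrade this HS-weak relative compactness to relative compactness in the strictly stronger $\sigma(\mathfrak{I}(\h),\B)$-topology: the two topologies do not coincide on trace-class operators, as the sequence $\kb{e_n}{e_n}$ already shows (HS-weakly null but not $\sigma(\mathfrak{I}(\h),\B)$-null). I would use an Akemann-type uniform integrability criterion, combining the HS bound, the trace-norm contraction $\|\T_{*t}(\sigma)\|_1\leq\|\sigma\|_1$, and the compactness of $\rho^{1/2}$ to find, given $\varepsilon>0$, a finite-rank spectral projection $p_\varepsilon$ of $\rho$ with $\|(\unit-p_\varepsilon)\T_{*t}(\sigma)(\unit-p_\varepsilon)\|_1<\varepsilon$ uniformly in $t$. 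Once relative weak compactness is established on the dense subspace $\rho^{1/2}\mathfrak{I}_2(\h)$, it extends to all of $\mathfrak{I}(\h)$ by trace-norm equicontinuity of $(\T_{*t})_t$, and the classical JDG theorem delivers the stated decomposition.
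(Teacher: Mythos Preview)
The paper does not prove this theorem; it simply recalls it from the literature, citing Propositions~3.1 and~3.2 of K\"ummerer--Nagel \cite{Kummerer} together with Theorem~2.8 in Chapter~V of \cite{EN}. Your overall strategy---reduce to the abstract Jacobs--de Leeuw--Glicksberg theorem on $\mathfrak{I}(\h)$ by verifying relative weak compactness of predual orbits---is reasonable in principle, and your verification of~(a) and your intertwining formula for $J_t^*$ are correct. However, the argument has a genuine gap precisely where you flag it.

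The step you label ``the hard part'' is not carried out, and it is not clear that your sketch can be completed with the ingredients you list. You have an HS-norm bound on $\T_{*t}(\rho^{1/2}z)$ and the trace-norm contraction, and you propose to combine these with compactness of $\rho^{1/2}$ to obtain the Akemann-type estimate $\|(\unit-p_\varepsilon)\T_{*t}(\sigma)(\unit-p_\varepsilon)\|_1<\varepsilon$ uniformly in~$t$, with $p_\varepsilon$ a spectral projection of~$\rho$. But $\T_{*t}$ need not respect the spectral decomposition of $\rho$ in any way: nothing prevents $\T_{*t}(\sigma)$ from concentrating on spectral subspaces of $\rho$ with arbitrarily small eigenvalue while remaining HS-bounded. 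The very example you cite---$\kb{e_n}{e_n}$, HS-weakly null but not weakly null in $\mathfrak{I}(\h)$---illustrates exactly why HS-boundedness plus trace-norm boundedness does not by itself yield uniform integrability without a further structural link between the orbit and~$\rho$.

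The approach in the cited reference \cite{Kummerer} sidesteps this difficulty. With the intended hypothesis $\tr{\rho\,\T_t(x)^*\T_t(x)}\leq\tr{\rho\,x^*x}$ (the right-hand side in the paper appears to be missing a~$\rho$; the printed inequality is strictly weaker), the map $x\mapsto x\rho^{1/2}$ embeds $\B$ into the GNS Hilbert space and $\T_t$ extends to a contraction semigroup there. JDG is then applied \emph{on the Hilbert space}, where bounded orbits are automatically relatively weakly compact, and the resulting splitting is transferred back to $\B$ and $\mathfrak{I}(\h)$ via a mean-ergodic projection. No Akemann-type criterion on the predual is needed.
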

Since each faithful invariant state clearly fulfills \eqref{disug-inv}, we obtain the splitting given by equation \eqref{dec-J}.

On the other hand, denoting by $\,^\perp A$ the vector space $\{\sigma\in\mathfrak{I}(\h)\,:\, \tr{\sigma x}=0\ \forall\ x\in A\}$ for all subset $A$ of $\B$, the atomicity of $\nT$ ensures the following facts:}
\begin{enumerate}
\item[(F1).] $\B=\nT\oplus\M_0$ with $\nT=\mathfrak{M}_r=\Ran\E$ and $\M_0=\mathfrak{M}_s={\Ker}\E$, where $\E:\B\to\nT$ is a conditional expectation compatible with the faithful state $\rho$ (see Theorem \ref{equiv-NT-atom} and Proposition \ref{EID=JDG});
\item[(F2).] $\mathfrak{I}(\h)=\,^\perp\M_0\oplus\,^\perp\nT$ with $$\,^\perp\M_0=\Ran\E_*\simeq\nT_*,\qquad\,^\perp\nT={\Ker}\E_*\simeq\M_{2*}.$$ Moreover each $\T_{*t}$ acts as a surjective isometry on $\,^\perp\M_0$, and $\lim_t\T_{*t}(\sigma)=0$ for all $\sigma\in\,^\perp\nT$ (see Theorem $10$ in \cite{CSU-new}).
\end{enumerate}
As a consequence, every state $\omega\in\nT_*$ is represented by a unique density $\sigma$
in $\,^\perp\M_0$, and, in this case, we write $\omega=\omega_\sigma$ to mean that
$\omega(x)=\tr{\sigma x}$ for all $x\in\nT$. Therefore, if we denote by $\mathcal{S}=(\mathcal{S}_t)_{t\geq 0}$ the restriction of $\T$ to $\nT$, we have
$$
(\mathcal{S}_{*t}\omega_\sigma)(x)=\omega_\sigma(\T_t(x))
=\tr{\sigma\,e^{\mi t H}xe^{-\mi t H}}=\tr{\E_*(e^{-\mi t H}\sigma e^{\mi t H}) x}
$$
for all $x=\E(x)\in\nT$, concluding that $\mathcal{S}_{*t}\omega_\sigma$ is represented by
the density $\E_*(e^{-\mi t H}\sigma e^{\mi t H})\in\,^\perp\M_0$. In a equivalent way, we have
\begin{equation}\label{ev-perp-M_0}
\T_{*t}(\sigma)=\E_*(e^{-\mi t H}\sigma e^{\mi t H})\qquad\forall\,\sigma\in\,^\perp\M_0.
\end{equation}

\begin{theorem}\label{RT=NT_*}
If $\nT$ is atomic and there exists a faithful invariant state, then
$$
\RT=\,^\perp\M_0=\{\sigma\in\mathfrak{I}(\h)\,:\,\T_{*t}(\sigma)
=\E_*(e^{-\mi t H}\sigma\,e^{\mi t H})\ \,\forall\,t\geq 0\}\simeq\nT_*,
$$
for every Hamiltonian $H$ in a GKSL representation of the generator of $\T$.
\end{theorem}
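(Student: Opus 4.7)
The plan is to split the claim into three assertions:
(i) $\RT\subseteq{}^\perp\M_0$;
(ii) $^\perp\M_0$ coincides with the set on the right of the display and is isometrically isomorphic to $\nT_*$;
(iii) $^\perp\M_0\subseteq\RT$.
Assertion (ii) is handed to us by property (F2) combined with formula \eqref{ev-perp-M_0}, so no further work is required there.

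For (i), I would take a density $\sigma$ with $\T_{*t}(\sigma)=e^{\mi t\lambda}\sigma$ for some $\lambda\in\R$ and any $x\in\M_0$. From the duality
\[
\tr{\sigma x}=e^{-\mi t\lambda}\tr{\T_{*t}(\sigma)\,x}=e^{-\mi t\lambda}\tr{\sigma\,\T_t(x)},
\]
the $w^*$-continuity of $\sigma$ as a functional on $\B$ together with $\T_t(x)\to 0$ in the $w^*$-topology forces $\tr{\sigma x}=0$; passing to the norm-closed linear span yields (i).

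Assertion (iii) is the real work and is where the atomicity hypothesis is used in full. By Theorem~\ref{th:main}, $\nT=\oplus_{i\in I}(\mathcal{B}(\kk_i)\otimes\unit_{\mm_i})$ and the restricted semigroup $\mathcal{S}_t=\T_t|_{\nT}$ acts on the $i$-th block as $x_i\mapsto e^{\mi tK_i}x_i e^{-\mi tK_i}$, with each $K_i$ having \emph{pure point spectrum}. Choosing an orthonormal basis $(u_k^{(i)})_k$ of $\kk_i$ of eigenvectors of $K_i$ with eigenvalues $\lambda_k^{(i)}$, the rank-one operators $\kb{u_j^{(i)}}{u_k^{(i)}}$ diagonalise the predual action $\eta\mapsto e^{-\mi tK_i}\eta e^{\mi tK_i}$ on $\mathfrak{I}(\kk_i)$ with unimodular eigenvalues $e^{-\mi t(\lambda_j^{(i)}-\lambda_k^{(i)})}$, and they span $\mathfrak{I}(\kk_i)$ in trace norm. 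Assembled over $i\in I$, this shows that $\nT_*$ is the norm-closed linear span of eigenvectors of $\mathcal{S}_{*t}$ with modulus-one eigenvalues.

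To conclude (iii), I would transport this back to $^\perp\M_0$: via (F2) each such eigenvector is represented by a unique density $\sigma\in{}^\perp\M_0$, and by \eqref{ev-perp-M_0} the action of $\T_{*t}$ on $^\perp\M_0$ is conjugate through the isomorphism to the action of $\mathcal{S}_{*t}$ on $\nT_*$, so $\sigma$ is itself an eigenvector of $\T_{*t}$ with the same unimodular eigenvalue. Hence $\sigma\in\RT$, and closure of the span yields (iii). The main delicacy is the verification that the intertwining supplied by \eqref{ev-perp-M_0} preserves the purely-imaginary point-spectrum decomposition; once one observes that $\E$ is the identity on $\nT$ and that $\mathcal{S}_t=\T_t$ on $\nT$, the two functionals $x\mapsto\tr{\T_{*t}(\sigma)x}$ and $x\mapsto\tr{\E_*(e^{-\mi tH}\sigma e^{\mi tH})\,x}$ agree on $\nT$ and both densities lie in $^\perp\M_0$, so the isomorphism of (F2) forces them to coincide in $\mathfrak{I}(\h)$.
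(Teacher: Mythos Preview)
Your proof is correct, and parts (i) and (ii) match the paper's argument (though for (ii) you should make explicit the reverse inclusion: if $\T_{*t}(\sigma)=\E_*(e^{-\mi tH}\sigma e^{\mi tH})$ for all $t$, set $t=0$ to get $\sigma=\E_*(\sigma)\in\Ran\E_*={}^\perp\M_0$).

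For the key inclusion (iii), however, you take a genuinely different route. The paper does \emph{not} use the explicit eigenvector structure coming from the pure point spectrum of the $K_i$. Instead, it invokes the Jacobs--de Leeuw--Glicksberg splitting on the predual (Theorem~\ref{kummerer}), which yields $\mathfrak{I}(\h)=\RT\oplus\{\sigma:0\in\overline{\{\T_{*t}(\sigma)\}}^w\}$. Comparing this with the decomposition $\mathfrak{I}(\h)={}^\perp\M_0\oplus{}^\perp\nT$ from (F2), and having already shown $\RT\subseteq{}^\perp\M_0$, the paper proves the complementary inclusion $\{\sigma:0\in\overline{\{\T_{*t}(\sigma)\}}^w\}\subseteq{}^\perp\nT$ (using that $\nT=\mathfrak{M}_r$ from Theorem~\ref{equiv-NT-atom}); equality then follows by a dimension-of-complements argument. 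Your approach is more constructive: you directly exhibit a trace-norm dense family of $\T_{*t}$-eigenvectors in ${}^\perp\M_0$ by pulling back the rank-one operators $\kb{u_j^{(i)}}{u_k^{(i)}}$ through the isomorphism ${}^\perp\M_0\simeq\nT_*$, exploiting item~3 of Theorem~\ref{th:main}. This avoids any appeal to Theorem~\ref{kummerer} and makes the eigenstructure completely explicit; the paper's argument, by contrast, is shorter and more symmetric but relies on the abstract predual JDG machinery.
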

\begin{proof}
The inclusion $\,^\perp\M_0\subseteq\{\sigma\in\mathfrak{I}(\h)\,:\,\T_{*t}(\sigma)
=\E_*(e^{-\mi t H}\sigma\,e^{\mi t H})\ \,\forall\,t\geq 0\}$ follows from the previous
discussion. On the other hand, if we have $\T_{*t}(\sigma)=\E_*(e^{-\mi t H}\sigma e^{\mi t H})$
for all $t\geq 0$, taking $t=0$ we get $\sigma=\E_*(\sigma)$, i.e. $\sigma$ belongs to $\,^\perp\M_0$.

Now, given $\sigma\in\RT$ such that $\T_{*t}(\sigma)=e^{\mi t\lambda}\sigma$
for all $t\geq 0$, $\lambda\in\mathbb{R}$, we have
\[
\tr{\sigma x}=\lim_{t\to\infty}\tr{\sigma x}=\lim_{t\to\infty}\tr{\T_{*t}(\sigma)
e^{-\mi t\lambda}x}=\lim_te^{-\mi t\lambda}\tr{\sigma\T_t(x)}=0
\]
for all $x\in\M_0$, so that $\sigma$ belongs to $\,^\perp\M_0$. This proves that $\RT$
is contained in $\,^\perp\M_0$.\\
In order to prove the opposite inclusion it is enough to show that $\,^\perp\nT$ contains $\{\sigma\in\mathfrak{I}(\h)\,:\,0\in\overline{\{\T_{*t}(\sigma)\}}^w_{t\geq 0}\}$, since
we have
\[
\mathfrak{I}(\h)=\RT\oplus\{\sigma\in\mathfrak{I}(\h)\,:\,
0\in\overline{\{\T_{*t}(\sigma)\}}^w_{t\geq 0}\}=\,^\perp\M_0\oplus\,^\perp\nT
\]
 by equation \eqref{dec-J}, item (F2) and Theorem \ref{equiv-NT-atom}.\\
So, let $\sigma\in\mathfrak{I}(\h)$ such that $0\in\overline{\{\T_{*t}(\sigma)\}}^w_{t\geq 0}$;
given $(t_\alpha)$ with $w-\lim_\alpha\T_{*t_\alpha}(\sigma)=0$ and $x\in\mathfrak{M}_r$
such that $\T_t(x)=e^{\mi t\lambda}x$ for some $\lambda\in\mathbb{R}$, we have
\[
\tr{\sigma x}=\lim_\alpha\tr{\sigma e^{-\mi t\lambda}\T_{t_\alpha}(x)}
=\lim_\alpha e^{-\mi t\lambda}\tr{\T_{*t_\alpha}(\sigma)x}=0.
\]
This means that $\sigma$ belongs to $\,^\perp\nT$ by Theorem \ref{equiv-NT-atom}.
\end{proof}

In general, when there does not exist a faithful invariant state $\RT$ could be different
from $\nT_*$, as we can see in Example \ref{NTRT}.
\begin{example}\label{NTRT}\rm
Let us consider a generic Quantum Markov Semigroup with $\mathbb{C}^3$, more precisely the uniformly continuous QMS generated by
\[
\Ll(x) = G^* x + \sum_{j=1,2}L_{3j}^* x L_{3j} + xG
\]
where
\begin{eqnarray*}
 G = \left(-\frac{\gamma_{33}}{2}
 + i \kappa_3 \right) |e_3\rangle \langle e_3|,
\qquad
L_{3j}=\sqrt{\gamma_{3j}}\,|e_j\rangle\langle e_3| \qquad \mbox{for } j=1,2,
\end{eqnarray*}
with $\kappa_3\in\mathbb{R}$, $\gamma_{3j}> 0$ for $j=1,2$, and $\gamma_{33}=-\gamma_{31}-\gamma_{32}$. We know that, the restriction of $\Ll$ to the diagonal matrices is the generator of a continuous time Markov chain $(X_t)_t$ with values in $\{1,2,3\}$. For more details see \cite{AFH,generic}.\\
Since $1$ and $2$ are absorbing states for $(X_t)_t$, and $3$ is a transient state, by Proposition $2$ in \cite{isolati} we know that any invariant state of $\T$ is supported on $\spanno\{e_1,e_2\}$. In particular, this implies there is no faithful invariant state. \\
Moreover, Theorem $8$ in \cite{isolati} gives $\nT=\mathbb{C}\unit$, since the absorbing states are accessible from $3$. As a consequence, $\nT_*=\C\unit$.\\
On the other hand, since $1$ is absorbing, the state $\ee{1}{1}$ is invariant, and so it belongs in particular to $\RT$. Therefore, we have $\RT\neq\nT_*$ .
\end{example}

We can now give the structure of reversible states when $\nT$ is a {type $I$ factor}.
\begin{theorem}\label{th:rev-stat}
Let $\T$ be a QMS on $\mathcal{B}(\kk\otimes\mm)$ with a faithful
invariant state $\rho$ and $\nT=\mathcal{B}(\kk)\otimes\unit_\mm$
and let $\tau_\mm$ be the unique invariant state of the partially
traced semigroup $\T^\mm$ defined in Theorem \ref{th:main} item {\rm 2.}
Then a state $\eta$ belongs to $\RT\simeq\nT_*$ if and only if
\begin{equation}\label{eq:form-eta}
\eta = \sigma \otimes \tau_\mm
\end{equation}
for some state $\sigma$ on $\mathcal{B}(\kk)$.\end{theorem}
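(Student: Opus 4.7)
The plan is to leverage Theorem \ref{RT=NT_*}, which identifies $\RT$ with $\nT_*$ via restriction from $\B$ to $\nT = \mathcal{B}(\kk) \otimes \unit_\mm$. Under this identification every $\eta \in \RT$ corresponds to a unique density $\sigma \in \mathfrak{I}(\kk)$ with $\eta(x \otimes \unit_\mm) = \tr{\sigma x}$ for all $x \in \mathcal{B}(\kk)$, and $\eta$ is a state if and only if $\sigma$ is. The strategy is to show that the product $\sigma \otimes \tau_\mm$ also lies in $\RT$ and restricts to the same functional on $\nT$; bijectivity of the identification will then force $\eta = \sigma \otimes \tau_\mm$.

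For the ``if'' direction I would start from the factorization in item 2 of Theorem \ref{th:main} together with the invariance $\T^\mm_{*t}(\tau_\mm) = \tau_\mm$ to compute
\[
\T_{*t}(\sigma \otimes \tau_\mm) = e^{-\mi t K}\sigma\,e^{\mi t K} \otimes \tau_\mm.
\]
By item 3 of Theorem \ref{th:main}, $K$ has pure point spectrum, so $K = \sum_j \lambda_j P_j$ with spectral projections $P_j$. Each term $(P_j \sigma P_k) \otimes \tau_\mm$ then satisfies $\T_{*t}(\cdot) = e^{\mi t(\lambda_k - \lambda_j)}(\cdot)$ with real eigenvalue $\lambda_k - \lambda_j$, hence belongs to the generating set of $\RT$. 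Truncation by the finite spectral projection $Q_N = \sum_{j \leq N} P_j$ yields $Q_N \sigma Q_N \to \sigma$ in trace norm (a standard fact for trace-class operators), so $\sigma \otimes \tau_\mm$ lies in the trace-norm closed linear span of such eigenvectors, i.e.\ in $\RT$.

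For the ``only if'' direction, given a state $\eta \in \RT$ let $\sigma \in \mathfrak{I}(\kk)$ be the density representing $\eta|_\nT$ under the canonical identification $\nT_* \simeq \mathfrak{I}(\kk)$. The previous paragraph shows $\sigma \otimes \tau_\mm \in \RT$, and a direct computation gives
\[
\tr{(\sigma \otimes \tau_\mm)(x \otimes \unit_\mm)} = \tr{\sigma x} = \eta(x \otimes \unit_\mm),
\]
so $\eta$ and $\sigma \otimes \tau_\mm$ agree on $\nT$. The injectivity from Theorem \ref{RT=NT_*} then yields $\eta = \sigma \otimes \tau_\mm$.

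The only potentially delicate point I foresee is ensuring the spectral expansion of $\sigma$ converges in a topology strong enough to place $\sigma \otimes \tau_\mm$ inside $\overline{\spanno}$; this reduces to the trace-norm convergence $Q_N \sigma Q_N \to \sigma$, which is automatic since $\sigma$ is trace class and $Q_N \to \unit_\kk$ strongly. Everything else is bookkeeping of identifications already established in Theorems \ref{th:main} and \ref{RT=NT_*}.
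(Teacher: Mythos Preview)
Your proof is correct and takes a genuinely different route from the paper's. The paper expands an arbitrary eigenvector $\eta$ of $\T_{*t}$ in the orthonormal eigenbasis $(e_j)$ of $K$ as $\eta=\sum_{j,k}|e_j\rangle\langle e_k|\otimes\eta_{jk}$, observes that the eigenvalue equation forces each $\eta_{jk}$ to lie in $\mathcal{R}(\T^\mm_*)$, and then invokes Theorem~\ref{RT=NT_*} \emph{for the partially traced semigroup} $\T^\mm$ (together with $\mathcal{N}(\T^\mm)=\mathcal{F}(\T^\mm)=\mathbb{C}\unit_\mm$ and Proposition~\ref{FT-atom}) to conclude $\mathcal{R}(\T^\mm_*)=\spanno\{\tau_\mm\}$, whence $\eta_{jk}=\tr{\eta_{jk}}\tau_\mm$. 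You instead apply Theorem~\ref{RT=NT_*} \emph{to $\T$ itself}: once you have verified that every $\sigma\otimes\tau_\mm$ lies in $\RT$ (via the spectral truncations $Q_N\sigma Q_N$), the injectivity of the restriction map $\RT\to\nT_*$ immediately forces any $\eta\in\RT$ to equal the unique $\sigma\otimes\tau_\mm$ sharing its restriction. Your argument is cleaner in that it never needs to analyze $\mathcal{R}(\T^\mm_*)$ or appeal to Proposition~\ref{FT-atom}; in fact the uniqueness of $\tau_\mm$ becomes a corollary rather than an input (two distinct invariant states would yield two elements of $\RT$ with the same restriction to $\nT$). The paper's approach, on the other hand, is more transparent about \emph{why} the $\mm$-factor is pinned down: it isolates the mechanism $\mathcal{R}(\T^\mm_*)=\spanno\{\tau_\mm\}$ explicitly.
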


\begin{proof} Let $(e_j)_{j\ge 1}$ be an orthonormal basis of eigenvectors of $K$ so that
$Ke_j = \kappa_je_j$ for some $\kappa_j \in \mathbb{R}$. Given a state $\eta$, we can write
\[
\eta=\sum_{j,k\ge 1} |e_j\rangle\langle e_k|
\otimes \eta_{jk}
\]
with $\eta_{jk}$ trace class operator on $\mm$, so that
\[
 \T_{*t}(\eta)
=\sum_{j,k\ge 1} \hbox{\rm e}^{\mi (\kappa_k-\kappa_j)t}
|e_j\rangle\langle e_k| \otimes \T^\mm_{*t}(\eta_{jk})
\]
by Theorem \ref{th:main}. Therefore, by the linear independence of operators
$|e_j\rangle\langle e_k|$, we have $\T_{*t}(\eta)=e^{\mi t\lambda}\eta$ for some
$\lambda\in\mathbb{R}$ if and only if
\[
\T^\mm_{*t}(\eta_{jk})=e^{\mi t(\lambda-\kappa_k+\kappa_j)}\eta_{jk}\qquad\forall\,j,k,
 \]
 i.e. if and only if each $\eta_{jk}$ belongs to $\mathcal{R}(\T^\mm_*)$.\\
Now, by item $3$ of Theorem \ref{th:main} we know that $\T^{\mm}$ has a unique (faithful)
invariant state $\tau_\mm$ and  $\mathcal{N}(\T^\mm)=\mathcal{F}(\T^{\mm})$; hence,
Theorem \ref{RT=NT_*} and Proposition \ref{FT-atom} give $\mathcal{R}(\T^\mm_*)\simeq\mathcal{N}(\T^\mm)_*=\mathcal{F}(\T^{\mm})_*
=\mathcal{F}(\T^{\mm}_*)=\spanno\{\tau_\mm\}$. As a consequence, we can conclude that
$\eta$ belongs to $\nT_*$ if and only if $\eta_{jk}=\tr{\eta_{jk}}\tau_\mm$ for all
$j,k\ge 1$, i.e. if and only if
\[
\eta = \sum_{j,k} \left(\tr{\eta_{jk}} |e_j\rangle\langle e_k|\right)
\otimes \tau_\mm=\sigma\otimes\tau_\mm
\]
with $\sigma:=\sum_{j,k} \tr{\eta_{jk}} |e_j\rangle\langle e_k|$.\\
This is enough to prove the statement since $\RT$ is the vector space generated
by eigenstates of $\T_{*t}$ corresponding to modulo $1$ eigenvalues.
\end{proof}

If $\nT$ is not a type I factor, but it is atomic, we can obtain a similar result using
that reversible states are \lq\lq block-diagonal\rq\rq, as the following proposition shows.
\begin{proposition}\label{eta-pi-pj} Assume $\nT$ atomic. Let $\T$ be a QMS with a faithful
invariant state and let $\nT$ as in \eqref{eq:NT-decomp} with $(p_i)_{i\in I}$ minimal
projections in the center of $\nT$. Then $p_i\sigma p_j=0$ for
all $i\not=j$ and for all reversible state $\sigma\in\RT$.
\end {proposition}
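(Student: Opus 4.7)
The plan is to exploit two ingredients made available by atomicity: the conditional expectation $\E:\B\to\nT$ is an $\nT$-bimodule map, and by Theorem \ref{RT=NT_*} the reversible states coincide with $\,^\perp\M_0$, where $\M_0=\ker\E$. The rest is essentially a one-line trace duality.

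First I would show that, for every $i\neq j$ and every $x\in\B$, the operator $p_ixp_j$ lies in $\M_0$. Indeed, since $\E$ is a normal conditional expectation onto $\nT$, it satisfies the bimodule property $\E(ayb)=a\E(y)b$ for $a,b\in\nT$. Applying this with $a=p_i$, $b=p_j$ and using that $\E(x)\in\nT$ together with the fact that the $p_i$ are mutually orthogonal projections in the centre of $\nT$, one gets
\[
\E(p_ixp_j)=p_i\E(x)p_j=p_ip_j\E(x)=0.
\]
Hence $p_ixp_j\in\ker\E=\M_0$ for $i\neq j$.

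Next, by Theorem \ref{RT=NT_*} we have $\RT=\,^\perp\M_0$, so any reversible $\sigma$ satisfies $\tr{\sigma z}=0$ for every $z\in\M_0$. Applying this to $z=p_jxp_i$ (still in $\M_0$ by the preceding step, after swapping the roles of the indices) and using the cyclicity of the trace,
\[
0=\tr{\sigma\,p_jxp_i}=\tr{(p_i\sigma p_j)\,x}\qquad\forall\,x\in\B,
\]
which forces $p_i\sigma p_j=0$ whenever $i\neq j$.

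I do not anticipate any genuine obstacle here: once Theorem \ref{RT=NT_*} is invoked to identify $\RT$ with $\,^\perp\M_0$ and Tomiyama's bimodule property of $\E$ is used, the statement reduces to the central orthogonality $p_ip_j=0$ on elements of $\nT$. The only point worth being explicit about is that the centrality of $(p_i)_{i\in I}$ in $\nT$ is exactly what is needed to collapse $p_i\E(x)p_j$ to $p_ip_j\E(x)$; without centrality one would only get a block-off-diagonal vanishing relative to a particular decomposition rather than the stated identity.
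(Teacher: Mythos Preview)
Your argument is correct. Both your proof and the paper's hinge on Theorem \ref{RT=NT_*}, i.e.\ the identification $\RT=\,^\perp\M_0$, and both conclude via the same trace duality, but the intermediate step is organised dually. The paper first shows that $p_i\sigma p_j$ itself lies in $\RT$: using that each $p_k$ is $\T_t$-invariant, one has $\T_{*t}(p_i\sigma p_j)=p_i\T_{*t}(\sigma)p_j=e^{\mi t\lambda}p_i\sigma p_j$ for an eigenvector $\sigma$, hence $p_i\sigma p_j\in\RT=\,^\perp\M_0$; it then observes that $\tr{p_i\sigma p_j\,x}=\tr{\sigma\,p_jxp_i}$ vanishes for $x\in\nT$ because $p_j\nT p_i=0$, and being in $\,^\perp\M_0$ it vanishes on $\M_0$ too, so it is zero. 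You instead show directly that every off-diagonal block $p_ixp_j$ lies in $\M_0$ via Tomiyama's bimodule property of $\E$, and then annihilate it with $\sigma\in\,^\perp\M_0$. Your route is slightly more economical: it avoids the eigenvector reduction and the appeal to $\T_t$-invariance of the $p_k$, and it works uniformly for all $\sigma\in\RT$ without passing to generators of the span. The paper's route, on the other hand, makes the dynamical content explicit (compression by fixed central projections preserves reversibility), which is conceptually in keeping with the surrounding narrative.
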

\begin{proof}
Let $\sigma\in\RT$ such that $\T_{*t}(\sigma)=e^{\mi t\lambda}\sigma$ for some
$\lambda\in\mathbb{R}$. Since $\RT=\,^\perp\,\M_0$, by (F1) we have
$\tr{\sigma x}=\tr{\sigma x_1}$ for all $\B\ni x=x_1+x_2$ with $x_1\in\nT$ and $x_2\in \M_0$.
Moreover, the property $\T_t(p_k)=p_k$ for all $k\in I$ and $t\geq 0$ gives
$$
\T_{*t}(p_i\sigma p_j)=p_i\T_{*t}(\sigma)p_j
=e^{\mi  t\lambda}\,p_i\sigma p_j\qquad\, \forall\ i,j\in I,
$$
so that each $p_i\sigma p_j$ belongs to $\RT\simeq\nT_*$. Therefore, since
$\tr{p_i\sigma p_jx}=\tr{\sigma p_j xp_i}=0$ for all $x\in\nT=\oplus_{i\in I}p_i\nT p_i$, $i\neq j$,
we obtain that $p_i\sigma p_j=0$ for all $i\neq j$.
\end{proof}
As a consequence, by Theorem \ref{th:main} we have the following characterization of
reversible states
\begin{theorem}\label{th:form-eta}
Assume $\nT$ atomic and suppose there exists a faithful $\T$- invariant
state.  Let $(p_i)_{i\in I}$, $(\kk_i)_{i\in I}$, $(\mm_i)_{i\in I}$ be as
in Theorem \ref{th:main}. A  state $\eta$ belongs to $\RT$ if and only if it can be
written in the form
\[
\eta=\sum_{i\in I}
\tr{\eta p_i} \sigma_i \otimes \tau_{\mm_i}
\]
where, for every $i\in I$,
\begin{enumerate}
\item $\tau_{\mm_i}$ is the unique $\T^{\mm_i}$-invariant state
which is also faithful,
\item $\sigma_i$ is a density on $\kk_i$.\end{enumerate}
\end{theorem}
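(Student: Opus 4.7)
The plan is to reduce Theorem \ref{th:form-eta} to the type I factor case already handled by Theorem \ref{th:rev-stat}, exploiting the block-diagonal structure of reversible states from Proposition \ref{eta-pi-pj}. First I would observe that although Proposition \ref{eta-pi-pj} is literally proved only for eigenstates, its conclusion $p_i \sigma p_j = 0$ for $i\neq j$ extends to every $\sigma \in \RT$ by linearity and weak continuity of the maps $\sigma \mapsto p_i \sigma p_j$, since by definition $\RT$ is the closed linear span of eigenstates. Hence any reversible $\eta$ satisfies $\eta = \sum_{i\in I} p_i \eta p_i$ (convergent in trace norm), and $\tr{p_i \eta p_i} = \tr{\eta p_i} =: t_i$.

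For the forward implication, I then invoke item 2 of Theorem \ref{th:main}: each $p_i$ is $\T$-invariant and the restriction $\T^{(i)} := (\T_t|_{\mathcal{B}(p_i \h)})_{t\geq 0}$ is a uniformly continuous QMS on $\mathcal{B}(\kk_i \otimes \mm_i)$ acting as $\T^{(i)}_t(x_i \otimes y_i) = e^{\mi t K_i} x_i e^{-\mi t K_i}\otimes \T^{\mm_i}_t(y_i)$, whose decoherence-free subalgebra is the type I factor $\mathcal{B}(\kk_i)\otimes \unit_{\mm_i}$. By item 3 of Theorem \ref{th:main}, $K_i$ has pure point spectrum, so choosing any faithful state $\tau_{\kk_i}$ on $\kk_i$ diagonal in an eigenbasis of $K_i$ produces $\tau_{\kk_i}\otimes \tau_{\mm_i}$ as a faithful $\T^{(i)}$-invariant state, putting us in the hypotheses of Theorem \ref{th:rev-stat} for $\T^{(i)}$. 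Since $\T_{*t}$ commutes with compression by $p_i$ (because $p_i \in \nT$ is $\T$-invariant), we have $\T^{(i)}_{*t}(p_i \eta p_i) = p_i \T_{*t}(\eta) p_i$, so $p_i \eta p_i$ belongs to the reversible subspace of $\T^{(i)}$. For every $i$ with $t_i > 0$, the normalized state $t_i^{-1} p_i \eta p_i$ is therefore reversible for $\T^{(i)}$, and Theorem \ref{th:rev-stat} gives $t_i^{-1} p_i \eta p_i = \sigma_i \otimes \tau_{\mm_i}$ for some density $\sigma_i$ on $\kk_i$. Summing over $i$ (picking $\sigma_i$ arbitrarily when $t_i = 0$) yields the claimed form.

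For the converse, suppose $\eta = \sum_{i\in I} t_i\, \sigma_i \otimes \tau_{\mm_i}$. Decompose each $\sigma_i = \sum_{j,k} s^{(i)}_{jk}\, \kb{e^{(i)}_j}{e^{(i)}_k}$ in a basis of eigenvectors $K_i e^{(i)}_j = \kappa^{(i)}_j e^{(i)}_j$. Then Theorem \ref{th:main} yields $\T_{*t}(\kb{e^{(i)}_j}{e^{(i)}_k}\otimes \tau_{\mm_i}) = e^{\mi t(\kappa^{(i)}_k - \kappa^{(i)}_j)}\,\kb{e^{(i)}_j}{e^{(i)}_k}\otimes \tau_{\mm_i}$, exhibiting $\eta$ as a trace-norm convergent combination of $\T_{*t}$-eigenstates with purely imaginary generator eigenvalues, so $\eta \in \RT$.

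The main obstacle is the preparatory step: extending the block-diagonality statement of Proposition \ref{eta-pi-pj} from pure eigenstates to all of $\RT$, and certifying that the block semigroup $\T^{(i)}$ admits a faithful invariant state so that Theorem \ref{th:rev-stat} applies verbatim. Both points are mild once the pure point spectrum of $K_i$ and the invariance of the $p_i$ are exploited, but they need to be verified carefully because $\RT$ is defined as a closed linear span rather than as a set of eigenstates. After these verifications, the result falls out from a block-wise application of Theorem \ref{th:rev-stat}.
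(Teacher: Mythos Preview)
Your proposal is correct and follows precisely the route the paper intends: the paper gives no explicit proof of Theorem \ref{th:form-eta}, stating it simply ``as a consequence'' of Proposition \ref{eta-pi-pj} and Theorem \ref{th:main}, and your argument supplies exactly the missing block-wise reduction to Theorem \ref{th:rev-stat}. The two points you flag as needing care (extending $p_i\sigma p_j=0$ from eigenstates to all of $\RT$ by continuity, and producing a faithful invariant state for each $\T^{(i)}$) are genuine but routine; note that for the latter you could alternatively take $p_i\rho p_i/\tr{\rho p_i}$, which is faithful on $p_i\h$ and $\T^{(i)}$-invariant since $\T_{*t}$ commutes with compression by $p_i$.
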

We have thus derived the general form of reversible states starting from the structure of the
atomic decoherence-free algebra $\nT$. This is a well known fact for a completely positive and
unitary map (i.e. a channel) on a finite-dimensional space (see e.g. Theorem $6.16$ in \cite{wolf}
and section $V$ in \cite{viola}), but the proof of this result is not generalizable to the infinite dimensional case since it is based on a spectral decomposition of the channel
based on eigenvectors.

\section{Relationships with the structure of fixed points}
\label{sect:FT-NT}
In this section we investigate the structure of the set $\FT$ of fixed points of the
semigroup and its relationships with the decomposition of $\nT$ given in the previous
section.

{First of all we prove the atomicity of $\FT$ and
relate this algebra with the space of invariant states. Really, the reader can find the proof of these results in \cite{Frigerio-staz}. We report them for sake of completeness.}

\begin{proposition}\label{FT-atom}
If there exists a faithful normal invariant state, then $\FT$ is an atomic algebra and
$\FT_*$ is isomorphic to the space $\mathcal{F}(\T_*)$ of normal invariant functionals.
\end{proposition}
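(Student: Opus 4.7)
The plan is to produce a normal conditional expectation of $\B$ onto $\FT$, invoke Theorem \ref{Tomi} for atomicity, and then dualize to obtain the isomorphism between $\FT_*$ and $\FTS$.

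First I would construct a normal conditional expectation $\E_{\FT}:\B\to\FT$ via the mean ergodic theorem for QMS with a faithful normal invariant state. Concretely, the Cesàro averages
\[
M_T(x):=\frac{1}{T}\int_0^T\T_t(x)\,dt,\qquad T>0,
\]
converge in the weak$^*$-topology, for every $x\in\B$, to an element of $\FT$, because the faithful invariance of $\rho$ forces the weak$^*$-cluster points of $(M_T(x))_{T>0}$ to be fixed by all $\T_s$. The resulting map $\E_{\FT}$ is linear, normal, unital, positive and contractive; it satisfies $\rho\circ\E_{\FT}=\rho$ and restricts to the identity on $\FT$, so it is a normal conditional expectation onto $\FT$. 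Since $\B$ is atomic and admits such a conditional expectation onto $\FT$, Theorem \ref{Tomi} yields that $\FT$ is atomic.

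For the identification of $\FT_*$ with $\FTS$ I would consider the two maps
\[
R:\FTS\to\FT_*,\quad R(\omega):=\omega|_{\FT},\qquad
J:\FT_*\to\FTS,\quad J(\phi):=\phi\circ\E_{\FT}.
\]
Both are well-defined contractions: $J(\phi)$ is normal by normality of $\E_{\FT}$, and it is $\T_*$-invariant because $\E_{\FT}\circ\T_t=\E_{\FT}$ for every $t\ge 0$. One has $R\circ J=\mathrm{id}_{\FT_*}$ since $\E_{\FT}$ restricts to the identity on $\FT$, and $J\circ R=\mathrm{id}_{\FTS}$ because every $\omega\in\FTS$ satisfies $\omega(M_T(x))=\omega(x)$ for all $T>0$, so passing to the weak$^*$-limit yields $\omega(\E_{\FT}(x))=\omega(x)$ for every $x\in\B$. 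Therefore $R$ and $J$ are mutually inverse isometric isomorphisms.

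The principal obstacle is the construction of $\E_{\FT}$: establishing weak$^*$-convergence of the Cesàro means and identifying the limit as a fixed point of $\T$. This is the classical mean ergodic theorem in the von Neumann algebra setting (as in Frigerio's work), where the faithfulness of $\rho$ is essential — without it one only obtains relative compactness of the orbit, not convergence, and the limit need not even exist. Once $\E_{\FT}$ is in hand, both the atomicity of $\FT$ and the duality $\FT_*\simeq\FTS$ follow by routine arguments.
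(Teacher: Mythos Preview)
Your proposal is correct and follows essentially the same route as the paper: construct the normal conditional expectation onto $\FT$ via ergodic averages (the paper cites Frigerio and Frigerio--Verri for this, writing $\E$ as both the Abel and Ces\`aro limit), invoke Tomiyama's theorem for atomicity, and then use $\E$ (equivalently its predual) to identify $\FT_*$ with $\mathcal{F}(\T_*)$. Your explicit inverse pair $R,J$ is just a slightly more unpacked version of the paper's one-line observation that $\Ran\E_*=\mathcal{F}(\T_*)$ and $\sigma\mapsto\sigma\circ\E$ gives the isomorphism.
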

\begin{proof}
Since there exists a faithful invariant state, by Theorem $2.1$ in \cite{Frigerio-staz}
and in \cite{FV82} $\FT$ it is the image of a normal conditional expectation $\E:\B\to\FT$
given by
\begin{equation}\label{E-su-FT}
\E(x)=w^*-\lim_{\lambda\to 0}\lambda\int_0^\infty e^{-\lambda t}\T_t(x)\,dt
=w^*-\lim_{t\to +\infty}\frac{1}{t}\int_0^t\T_s(x)\,ds.
 \end{equation}
Hence, $\FT$ is atomic by Theorem \ref{Tomi}, the range of the predual operator $\E_*$
coincides with $\mbox{}\,^\perp{\Ker}\E$ and it is isomorphic to $\FT_*$ through the map
\[
\Ran\E_*=\,^\perp{\Ker}\E\ni\sigma\mapsto\sigma\circ\E\in\FT_*.
\]
Moreover, we clearly have $\Ran\E_*=\mathcal{F}(\T_*)$, (see also Corollary 2.2 in \cite{Frigerio-staz}).
\end{proof}

 {Therefore, assuming the existence of a faithful invariant state, we can find a countable set $J$}, and two sequences $(\Ss_j)_{j\in J}$,
$(\ff_j)_{j\in J}$ of {separable} Hilbert spaces such that
\begin{align}\label{dec-h-FT}
&\h\simeq\oplus_{j\in J}\left(\Ss_j\otimes\ff_j\right)\qquad\qquad\quad\ \mbox{(unitary equivalence)}\\ \label{dec-FT}
&\FT\simeq\oplus_{j\in J}\left(\mathcal{B}(\Ss_j)\otimes\unit_{\ff_j}\right),\quad
\mbox{($*$-isomorphism isometric)}
\end{align}
where $\unit_{\ff_j}$ denote the identity operator on $\ff_j$.\\
{Even if the decomposition \eqref{dec-FT} is given up to an isometric isomorphism, for sake of simplicity we will identify $\h$ with $\oplus_{j\in J}\left(\Ss_j\otimes\ff_j\right)$ and $\FT$ with $\oplus_{j\in J}\left(\mathcal{B}(\Ss_j)\otimes\unit_{\ff_j}\right)$.}
\smallskip\\

Now we can {state} for $\FT$ a similar result to Theorem \ref{th:main}. {Note that, it has already been proved for a quantum channel on a matrix algebra in \cite{viola} Lemma $6$, and in \cite{wolf} Theorems $6.12$ and $6.14$. Here, we extend this in the infinite-dimensional framework.}

\begin{theorem}\label{th:dec-FT}
Assume there exists a faithful normal invariant state. Let $(\Ss_j)_{j\in J}$ and $(\ff_j)_{j\in J}$ be two countable sequences of Hilbert spaces such that
\eqref{dec-h-FT} and \eqref{dec-FT} hold.
Then we have the following facts:
\begin{enumerate}
\item for every GKSL representation (\ref{eq:GKSL}) of the generator $\Ll$ by means
of operators $L_\ell, H$, we have
\begin{align*}
L_\ell  & = \oplus_{j\in J}\left(\unit_{\Ss_j}\otimes N^{(j)}_\ell\right)\quad\forall\,\ell\ge 1,\\
H & = \oplus_{j\in J}
\left(\lambda_j\unit_{\Ss_j}\otimes \unit_{\ff_j} + \unit_{\Ss_j}\otimes N^{(j)}_0\right),
\end{align*}
where $N^{(j)}_\ell$ are operators on $\ff_j$ such that the series
$\sum_{\ell}(N^{(j)}_\ell)^* N^{(j)}_\ell$ are strongly convergent for all $j\in J$, $(\lambda_j)_{j\in J}$ is a sequence of real numbers, and  every
 $M^{(j)}_0$ is a self-adjoint operator on $\ff_j$;
 \item $\T_t(x\otimes y)=x\otimes\T^{\ff_j}_t(y)$ for all $x\in\mathcal{B}(\Ss_j)$ and $y\in\mathcal{B}(\ff_j)$, for $j\in J$, where $\T^{\ff_j}$ is the QMS on $\mathcal{B}(\ff_j)$ generated by  $\Ll^{\ff_j}$, {whose GKSL representation is given by $\{
N^{(j)}_\ell, N^{(j)}_0\,:\ell\geq 1\}$;
}
\item every $\T^{\ff_j}$ is irreducible and possesses a unique (faithful) normal
invariant state $\tau_{\ff_j}$;
\item every invariant state $\eta$ has the form $\eta=\sum_{j\in J}\sigma_j\otimes\tau_{\ff_j}$
with $\sigma_j$ an arbitrary {positive} trace-class operator on $\Ss_j$ {such that $\sum_{j\in J}\tr{\sigma_j}=1$}.
\end{enumerate}
\end{theorem}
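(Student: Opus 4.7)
The plan is to adapt the proof of Theorem~\ref{th:main}, exploiting the parallel between the decompositions of $\nT$ and $\FT$, but with two essential simplifications: since $\FT\subseteq\nT$ by Proposition~\ref{prop-struct-NT}~item~4, the module property of $\nT$ is available for every central projection of $\FT$; and since elements of $\FT$ are genuinely fixed (not merely rotating), the semigroup must act trivially on the $\Ss_j$-factor, so the analogue of the Hamiltonian $K_i$ appearing in Theorem~\ref{th:main} collapses to a scalar $\lambda_j\unit_{\Ss_j}$.

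For item~1 I would use $\FT=\{L_k,L_k^*,H\}^\prime$ from Proposition~\ref{prop-struct-NT}~item~4: since each $p_j\in\FT$, the operators $L_\ell,H$ commute with every $p_j$, giving block-diagonal decompositions $L_\ell=\oplus_j L_\ell^{(j)}$ and $H=\oplus_j H^{(j)}$. On the $j$-th block, $L_\ell^{(j)}\in(\mathcal{B}(\Ss_j)\otimes\unit_{\ff_j})^\prime=\unit_{\Ss_j}\otimes\mathcal{B}(\ff_j)$, yielding $L_\ell^{(j)}=\unit_{\Ss_j}\otimes N_\ell^{(j)}$; analogously $H^{(j)}=\unit_{\Ss_j}\otimes\widetilde{N}_0^{(j)}$ with $\widetilde{N}_0^{(j)}$ self-adjoint, and one splits off a scalar $\lambda_j\unit_{\ff_j}$ to match the stated form. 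Item~2 then follows by substituting these expressions into (\ref{eq:GKSL}): the scalar contribution $\lambda_j p_j$ drops out of the commutator and the remaining calculation yields $\Ll(x\otimes y)=x\otimes\Ll^{\ff_j}(y)$ with $\Ll^{\ff_j}$ of the stated GKSL form.

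For item~3, irreducibility of $\T^{\ff_j}$ follows from the observation that any $\T^{\ff_j}$-invariant projection $q$ lifts by item~2 to $\unit_{\Ss_j}\otimes q\in\FT$; since $\FT$ restricted to the $j$-th block equals $\mathcal{B}(\Ss_j)\otimes\unit_{\ff_j}$, this forces $q\in\{0,\unit_{\ff_j}\}$. For existence of a faithful invariant state, I would start from the given faithful invariant $\rho$ of $\T$: since $p_j\in\nT$, the module identity $\T_t(p_j y p_j)=p_j\T_t(y)p_j$ shows that the compression $p_j\rho p_j$ is $\T_{*t}$-invariant; its partial trace over $\Ss_j$, which by item~2 intertwines $\T_{*t}|_{j\text{-block}}$ with $\T^{\ff_j}_{*t}$, then gives (after normalization) the required $\tau_{\ff_j}$. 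Faithfulness descends from $\rho$, and uniqueness follows from irreducibility by the standard result for irreducible QMS.

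Item~4 is the delicate step. My strategy uses the normal conditional expectation $\E:\B\to\FT$ of Proposition~\ref{FT-atom}: the Ces\`aro formula (\ref{E-su-FT}) yields $\tr{\eta y}=\tr{\eta\E(y)}$ for every invariant state $\eta$ and every $y\in\B$. The module property of $\E$, together with the block-diagonal form of $\FT$, gives $\E(p_j y p_k)=p_j\E(y)p_k=0$ for $j\neq k$; testing against arbitrary $y$ therefore forces $p_k\eta p_j=0$. On the diagonal blocks, expanding $p_j\eta p_j=\sum_{k,l}\kb{e_k}{e_l}\otimes\eta_{kl}$ in an ONB $(e_k)$ of $\Ss_j$ and applying $\T_{*t}|_{j\text{-block}}=\mathrm{id}\otimes\T^{\ff_j}_{*t}$ forces each $\eta_{kl}$ to be $\T^{\ff_j}_{*t}$-invariant; combined with item~3 and Proposition~\ref{FT-atom} applied to $\T^{\ff_j}$, this gives $\eta_{kl}\in\mathbb{C}\tau_{\ff_j}$, whence $p_j\eta p_j=\sigma_j\otimes\tau_{\ff_j}$ with $\sigma_j\geq 0$ trace-class on $\Ss_j$, and the normalization $\sum_j\tr{\sigma_j}=1$ is immediate. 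The main obstacle is precisely this vanishing of the off-diagonal blocks: a direct invocation of Proposition~\ref{eta-pi-pj} is not enough, because its central projections come from $\nT$ and are in general coarser than those of $\FT$, so the conditional-expectation detour is essential.
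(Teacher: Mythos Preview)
Your proposal is correct and matches the paper's approach closely. The paper's own proof is very compressed: for item~1 it writes $H$ as a block matrix $(H_{lm})$ and derives both the diagonal form and $H_{lm}=0$ for $l\neq m$ from $[x,H]=0$ with $x\in\FT$, whereas you obtain the block-diagonality more directly from $p_j\in\FT=\{L_k,L_k^*,H\}'$; for items~2--4 the paper simply says they follow as in Theorems~3.2, 4.1 and 4.3 of \cite{DFSU}, whereas you spell out the arguments. Your use of the conditional expectation $\E$ from Proposition~\ref{FT-atom} (via the module identity $\E(p_jyp_k)=p_j\E(y)p_k=0$) to kill the off-diagonal blocks $p_k\eta p_j$ in item~4 is a clean variant of the argument hidden behind the reference to \cite{DFSU}. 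One small point of care in item~3: your ``$\T^{\ff_j}$-invariant projection'' must be read as \emph{subharmonic}, since irreducibility is about subharmonic (not fixed) projections; the faithful invariant state you construct from $p_j\rho p_j$ is what promotes a subharmonic $q$ to a genuine fixed point, after which the lift $\unit_{\Ss_j}\otimes q\in\FT$ goes through as you describe.
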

\begin{proof}
Since \eqref{dec-FT} holds, like in the proof of Theorems 3.1, 3.2 in \cite{DFSU}
there exist operators $(N^{(j)}_\ell)_\ell$ on $\mathcal{B}(\ff_j)$
such that $L_\ell  = \oplus_{j\in J}\left(\unit_{\Ss_j}\otimes N^{(j)}_\ell\right)$ for all
$\ell\geq 1$ and $j\in J$.\\
Now, if $p_j$ is the orthogonal projection onto $\Ss_j\otimes\ff_j$, we have $H=\sum_{l,m}p_lH_{lm}p_m$ with
$H_{lm}:\h_{\Ss_m}\otimes\h_{\ff_m}\to\h_{\Ss_l}\otimes\h_{\ff_l}$ and $H_{lm}^*=H_{ml}$
for all $l,m\in J$. Since every $x=\oplus_{j\in J}(x_j\otimes\unit_{\ff_j})\in\FT$ commutes
with $H$ we get $0=[x,H]$, i.e.
\[
0=\sum_jp_j[x_j\otimes\unit_{\ff_j},H_{jj}]p_j+\sum_{j\neq m}p_j\left((x_j\otimes\unit_{\ff_j})H_{jm}-H_{jm}(x_m\otimes\unit_{\ff_m})\right)p_m,
\]
which implies
\[
[x_j\otimes\unit_{\ff_j},H_{jj}]=0\quad\forall\,j\in J,\quad (x_j\otimes\unit_{\ff_j})H_{jm}=H_{jm}(x_m\otimes\unit_{\ff_m})\quad\forall\,j\neq m.
\]
The first condition is equivalent to have $H_{jj}=\lambda_j\unit_{\Ss_j}\otimes \unit_{\ff_j}
+ \unit_{\Ss_j}\otimes N^{(j)}_0$ for some $ N^{(j)}_0\in\mathcal{B}(\ff_j)$ and $\lambda_j\in\mathbb{R}$; the second one gives $H_{jm}=0$ for all $j\neq m$, and so we obtain
item 1.\\
Item 2 trivially follows. The proof of items 3 and 4 are similar to the ones of
Theorem 4.1 and 4.3, respectively, in \cite{DFSU}.
\end{proof}

We want now to understand the relationships between decompositions \eqref{eq:UnT} and \eqref{dec-FT} {making use of the notations introduced in Theorems \ref{th:main} and \ref{th:dec-FT}. In particular, in Theorem \ref{NT-atomic-FT}, we find a spectral characterization of the decomposition of the fixed point algebra, up to an isometric isomorphism. Indeed, in this representation, the spaces $\Ss_j$ undergoing trivial evolutions are the eigenspaces of suitable Hamiltonians $K_i$ corresponding to their different eigenvalues.
\smallskip

First of all we introduce the following notation: for every $i\in I$ denote by
\begin{equation}\label{def-spc-Ki}
\sigma(K_i):=\{\kappa_j^{(i)}\,: j\in J_i\}
\end{equation}
with $\kappa_j^{(i)}\ne\kappa_l^{(i)}$ for $j\neq l$ in $J_i$, the (pure point) spectrum of
the Hamiltonian $K_i\in\mathcal{B}(\kk_i)$ for some at most countable set $J_i\subseteq\mathbb{N}$. Note that, if $\T$ has a faithful normal invariant state, then $\sigma(K_i)$ is exactly the spectrum of $K_i$ thanks to Theorem \ref{th:main}.\\
Without of loss of generality we can choose the family $\{J_i\,:\, i\in I\}$ such that $J_h\cap J_l=\emptyset$ whenever $h\neq l$. In this way, set
\begin{equation}\label{def-J} J:=\cup_{i\in I}J_i,
\end{equation}
for $j\in J$ there exists a unique $i\in I$ such that $j=j_i\in J_i$.

\begin{theorem}\label{NT-atomic-FT}
Assume $\nT$ atomic and let $\nT=\oplus_{i\in I}\left(\mathcal{B}(\kk_i)\otimes\unit_{\mm_i}\right)$ with $(\kk_i)_i$, $(\mm_i)_i$ two countable sequences of Hilbert spaces such that
$\h=\oplus_{i\in I}(\kk_i\otimes \mm_i)$. If there exists a faithful normal invariant state, up to an isometric isomorphism we have
\begin{equation}\label{rappr-FT}\FT=\oplus_{j\in J}\left(\mathcal{B}(\Ss_j)\otimes\unit_{\ff_j}\right)\end{equation} with
$J$ defined in \eqref{def-J}, and
\begin{equation}\label{def-sj-fj}
\Ss_j=\Ss_{j_i}:={\Ker}\left(K_i-\kappa_j^{(i)}\unit_{\kk_i}\right),\qquad
\ff_j=\ff_{j_i}:=\mm_i\qquad\forall\,j_i\in J_i,\ i\in I.
\end{equation}
\end{theorem}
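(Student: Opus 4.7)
The strategy is to exploit that, under the existence of a faithful normal invariant state, $\FT$ is already contained in $\nT$ (item 4 of Proposition \ref{prop-struct-NT}), and then to characterize which elements of $\nT$ are fixed by using the explicit form of $\T_t|_{\nT}$ given by Theorem \ref{th:main}.

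First I would identify $\h$ with $\oplus_{i\in I}(\kk_i\otimes\mm_i)$ and $\nT$ with $\oplus_{i\in I}(\mathcal{B}(\kk_i)\otimes\unit_{\mm_i})$ via the unitary equivalence, and work in this representation. A generic element of $\nT$ reads $x=\oplus_{i\in I}(x_i\otimes\unit_{\mm_i})$. By item 2 of Theorem \ref{th:main} (applied with $y_i=\unit_{\mm_i}$, which is trivially fixed by $\T^{\mm_i}$) we get
\[
\T_t(x)=\oplus_{i\in I}\bigl(\hbox{\rm e}^{\mi t K_i}x_i\hbox{\rm e}^{-\mi t K_i}\otimes\unit_{\mm_i}\bigr),\qquad t\ge 0.
\]
Since $\FT\subseteq\nT$, an element $x\in\nT$ lies in $\FT$ if and only if $\hbox{\rm e}^{\mi t K_i}x_i\hbox{\rm e}^{-\mi t K_i}=x_i$ for every $i\in I$ and every $t\ge 0$, equivalently if and only if $[x_i,K_i]=0$ for all $i\in I$.

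Next I would use item 3 of Theorem \ref{th:main}, which guarantees that each $K_i$ has pure point spectrum $\sigma(K_i)=\{\kappa_j^{(i)}:j\in J_i\}$. Writing $P_j^{(i)}$ for the spectral projection onto $\Ss_{j_i}:=\Ker(K_i-\kappa_j^{(i)}\unit_{\kk_i})$, we obtain the orthogonal decomposition $\kk_i=\oplus_{j\in J_i}\Ss_{j_i}$ and the resolution $K_i=\sum_{j\in J_i}\kappa_j^{(i)}P_j^{(i)}$. The commutant of $K_i$ inside $\mathcal{B}(\kk_i)$ is therefore exactly $\oplus_{j\in J_i}\mathcal{B}(\Ss_{j_i})$: an operator commuting with $K_i$ leaves each eigenspace invariant because the $\kappa_j^{(i)}$ are pairwise distinct, and conversely any block-diagonal operator with respect to the spectral decomposition commutes with $K_i$. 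Substituting back into the characterization of $\FT$ and using the disjointness of the $J_i$'s and the definitions \eqref{def-J} and \eqref{def-sj-fj}, I get
\[
\FT=\oplus_{i\in I}\oplus_{j\in J_i}\bigl(\mathcal{B}(\Ss_{j_i})\otimes\unit_{\mm_i}\bigr)=\oplus_{j\in J}\bigl(\mathcal{B}(\Ss_j)\otimes\unit_{\ff_j}\bigr),
\]
which is the claimed \eqref{rappr-FT}.

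The main obstacle I expect is justifying the spectral decomposition of $K_i$ cleanly when $\kk_i$ may be infinite-dimensional and $J_i$ countably infinite: convergence of the direct sum on the Hilbert space level and the resulting identification of the commutant require invoking pure point spectrum (which Theorem \ref{th:main}, item 3 gives us for free), so that no continuous spectrum contributes and the block decomposition is genuine. A minor bookkeeping point is also keeping track of the unitary equivalences: the identity $\FT=\oplus_{j\in J}(\mathcal{B}(\Ss_j)\otimes\unit_{\ff_j})$ holds up to the isometric isomorphism which, restricted to each summand $p_i\h=\kk_i\otimes\mm_i$, is the direct sum of the spectral unitary $\kk_i\simeq\oplus_{j\in J_i}\Ss_{j_i}$ tensored with the identity on $\mm_i$; this automatically delivers the form of $\FT$ stipulated by Proposition \ref{FT-atom} and Theorem \ref{th:dec-FT}.
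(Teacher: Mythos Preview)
Your proposal is correct and follows essentially the same route as the paper: both use $\FT\subseteq\nT$ (Proposition \ref{prop-struct-NT}, item 4), the explicit action $\T_t(x)=\oplus_i(\hbox{\rm e}^{\mi tK_i}x_i\hbox{\rm e}^{-\mi tK_i}\otimes\unit_{\mm_i})$ from Theorem \ref{th:main}, and the spectral decomposition of each $K_i$ (pure point spectrum, item 3 of Theorem \ref{th:main}) to identify $\FT$ with the block-diagonal algebra. The only cosmetic difference is that the paper verifies the two inclusions separately---checking by hand that rank-one operators $|u\rangle\langle v|\otimes\unit_{\mm_i}$ with $u,v$ in the same eigenspace are fixed---whereas you invoke directly that $\{K_i\}'=\oplus_{j\in J_i}\mathcal{B}(\Ss_{j_i})$; both are equivalent.
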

\begin{proof} By considering the spectral decomposition $K_i=\sum_{j\in J_i}\kappa_j^{(i)}q_{ji}$
with $(q_{ji})_{j\in J_i}$ mutually orthogonal projections such that
$$q_{ji}\kk_i=\Ker \left(K_i-\kappa_j^{(i)}\unit_{\kk_i}\right)=:\Ss_{j_i}$$ and
$\sum_{j\in J_i}q_{ji}=\unit_{\mm_i}$, we immediately obtain
\[
\kk_i\otimes\mm_i=\left(\oplus_{j\in J_i}\Ss_{j_i}\right)\otimes\mm_i=\oplus_{j\in J_i}\left(\Ss_{j_i}\otimes\ff_{j_i}\right)
\]
by setting $\ff_{j_i}:=\mm_i$ for all $j\in J_i$. Therefore, by definition of $J$, since every $j\in J$ belongs to a unique $J_i$, we have
\[
\h=\oplus_{i\in I}(\kk_i\otimes \mm_i)=\oplus_{i\in I}\oplus_{j\in J_i}\left(\Ss_{j_i}\otimes\ff_{j_i}\right)=\oplus_{j\in J}\left(\Ss_j\otimes\ff_j\right).
\]

In order to conclude the proof we have to show {equality \eqref{rappr-FT}}.
Given {$x\in\FT\subseteq\nT$ (see item $4$ in Proposition \ref{prop-struct-NT})}, we can write $x=\oplus_{i\in I}(x_i\otimes \unit_{\mm_i})$ with
$(x_i)_{i\in I}\subseteq\mathcal{B}(\kk_i)$, and so, by Theorem \ref{th:main} we have
\[
x=\oplus_{i\in I}(x_i\otimes \unit_{\mm_i})
=\T_t(\oplus_{i\in I}(x_i\otimes \unit_{\mm_i}))
=\oplus_{i\in I}(e^{\mi tK_i}x_ie^{-\mi tK_i}\otimes \unit_{\mm_i}).
\]
Consequently, $x_i=e^{\mi tK_i}x_ie^{-\mi tK_i}$ for all $i\in I$, i.e. every $x_i$
commutes with $K_i$, and then with each projection $q_{ji}$ with $j\in J_i$. This means that
each $x_i=\oplus_{j\in J_i}q_{ji}x_iq_{ji}$ belongs to the algebra
$\oplus_{j\in J_i}q_{ji}\mathcal{B}(\kk_i)q_{ji}
=\oplus_{j\in J_i}\mathcal{B}(q_{ji}\kk_i)
=\oplus_{j\in J_i}\mathcal{B}(\Ss_j)$, so that $x$ is in
$\oplus_{j\in J}\left(\mathcal{B}(\Ss_j)\otimes\unit_{\ff_j}\right)$.

On the other hand, given $i\in I$ and $j\in J_i$, for $u,v\in\Ss_j={\Ker}(K_i-\kappa_j^{(i)}\unit_{\kk_i})$ we get
$$
\T_t(\kb{u}{v}\otimes\unit_{\mm_i})=\kb{e^{\mi tK_i}u}{e^{\mi tK_i}v}\otimes\unit_{\mm_i}=\kb{u}{v}\otimes\unit_{\mm_i}\qquad\forall\,t\geq 0,
$$
where $u$ and $v$ are eigenvectors of $K_i$ associated with the same eigenvalue $\kappa_j^{(i)}$.
Since ${\Ker}(K_i-\kappa_j^{(i)}\unit_{\kk_i})$ is generated by
elements of the form $\kb{u}{v}$, and the net $(\T_t(z))_t$ is uniformly bounded for all $z$,
we obtain the inclusion $\mathcal{B}(\Ss_j)\otimes\unit_{\mm_i}\subseteq \FT$ for all $j\in J_i$, $i\in I$, i.e. $\mathcal{B}(\Ss_j)\otimes\unit_{\ff_j}\subseteq\FT$ for all $j\in J=\cup_{i\in I}J_i$.
\end{proof}
Theorem below shows how we can derive an \lq\lq atomic decomposition \rq\rq of $\FT$ from one of $\nT$'s. We want now to analyze the opposite procedure.
}

Assuming $\FT=\oplus_{j\in J}\left(\mathcal{B}(\Ss_j)\otimes\unit_{\ff_j}\right)$
with $(\Ss_j)_j$, $(\ff_j)_j$ two countable sequences of Hilbert spaces such that
$\h= \oplus_{j\in j}(\Ss_j\otimes \ff_j)$, and using notations of Theorem \ref{th:dec-FT},
we set an equivalence relation on $J$ in the following way:
\begin{definition} Given $j,k\in J$, we say that \emph{$j$ is in relation with $k$} (and write \emph{$j\sim k$}) if there exist a complex separable Hilbert space $\mm$ and unitary isomorphisms
\begin{equation}\label{V_j}V_j:\ff_j\to\mm,\qquad V_k:{\ff_k}\to\mm\end{equation} such that operators $\{V_jN_l^{(j)}V_j^*, V_jN_0^{(j)}V_j^*\,:\,l\geq 1\}$ and $\{V_kN_l^{(k)}V_k^*, V_kN_0^{(k)}V_k^*\,:\,l\geq 1\}$ give the same Lindbladian operator on $\mathcal{B}(\mm)$.
\end{definition}
We obtain in this way an equivalence relation which induces a partition of $J$, $J=\cup_{ n\in I}I_n$, for some finite or countable set $I\subseteq\mathbb{N}$, where each $I_n$ is an equivalence class with respect to $\sim$.

\begin{theorem}\label{FT-atomic-NT}
Assume that there exists a faithful normal invariant state and $\nT$ atomic.
Let $\FT=\oplus_{j\in J}\left(\mathcal{B}(\Ss_j)\otimes\unit_{\ff_j}\right)$
with $(\Ss_j)_j$, $(\ff_j)_j$ two countable sequences of Hilbert spaces such
that $\h= \oplus_{j\in j}(\Ss_j\otimes \ff_j)$, and let $\{I_n\,:\,n\in I\}$
be the set of equivalence classes of $J$ with respect to the relation $\sim$. Then $\nT$
is isometrically isomorphic to the direct sum $\oplus_{n\in I}\left(\mathcal{B}(\kk_n)\otimes\unit_{\mm_n}\right)$ with
\begin{equation}\label{def-kk-mm}\kk_n:=\oplus_{j\in I_n}\Ss_j,\qquad\mm_n:=V_j\ff_j\quad\forall\,j\in I_n,\end{equation}
where $V_j$'s are the unitary isomorphisms given in \eqref{V_j}.
\end{theorem}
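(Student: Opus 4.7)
The plan is to invert the construction of Theorem \ref{NT-atomic-FT}, extracting the factor structure of $\nT$ from that of $\FT$ via the equivalence relation $\sim$. I first apply Theorems \ref{th:main} and \ref{NT-atomic-FT} to $\nT$ itself: since $\nT$ is atomic and admits a faithful normal invariant state, there exists a factor decomposition $\nT \simeq \oplus_{i \in \widetilde{I}}(\mathcal{B}(\widetilde{\kk}_i) \otimes \unit_{\widetilde{\mm}_i})$ with associated GKSL data $\widetilde{K}_i, \widetilde{M}_\ell^{(i)}, \widetilde{M}_0^{(i)}$. Theorem \ref{NT-atomic-FT} then gives the induced $\FT$ decomposition
\[
\FT \simeq \oplus_{i \in \widetilde{I}}\oplus_{j \in J_i}\bigl(\mathcal{B}(\Ker(\widetilde{K}_i - \widetilde{\kappa}_j^{(i)}\unit_{\widetilde{\kk}_i})) \otimes \unit_{\widetilde{\mm}_i}\bigr),
\]
where $J_i = \sigma(\widetilde{K}_i)$. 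By essential uniqueness of the atomic decomposition of a von Neumann algebra into type~$I$ factors, the given decomposition $\FT = \oplus_{j \in J}(\mathcal{B}(\Ss_j) \otimes \unit_{\ff_j})$ must coincide with this one: I identify $J$ with $\cup_{i \in \widetilde{I}} J_i$, and for each $j = j_i \in J_i$ I obtain a unitary $W_j : \ff_j \to \widetilde{\mm}_i$ intertwining the GKSL operators $N_\ell^{(j)}, N_0^{(j)}$ of Theorem \ref{th:dec-FT} with $\widetilde{M}_\ell^{(i)}, \widetilde{M}_0^{(i)}$.

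Next, I show that the partition $\{J_i\}_{i \in \widetilde{I}}$ is exactly the partition of $J$ into $\sim$-equivalence classes. The easy direction is immediate: for $j, k \in J_i$, the unitaries $W_j, W_k$ send both $\ff_j$ and $\ff_k$ onto the common space $\widetilde{\mm}_i$, converting the two families of GKSL operators into the same family $\{\widetilde{M}_\ell^{(i)}, \widetilde{M}_0^{(i)}\}$, so the Lindbladians on $\mathcal{B}(\widetilde{\mm}_i)$ coincide and $j \sim k$. For the converse, if $j \in J_i$ and $k \in J_{i'}$ with $i \neq i'$, I argue that no pair of unitaries can make the data $(N_\ell^{(j)}, N_0^{(j)})$ and $(N_\ell^{(k)}, N_0^{(k)})$ yield the same Lindbladian: a genuine intertwining unitary between $\T^{\widetilde{\mm}_i}$ and $\T^{\widetilde{\mm}_{i'}}$ would combine with arbitrary operators $A:\widetilde{\kk}_{i'} \to \widetilde{\kk}_i$ to produce nontrivial elements of $\nT$ mapping the $i'$-block to the $i$-block, contradicting the orthogonality of the distinct minimal central projections of $\nT$. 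Hence $\{I_n\}_{n \in I}$ coincides with $\{J_i\}_{i \in \widetilde{I}}$.

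Finally, for the class $I_n$ corresponding to $i \in \widetilde{I}$, the spectral decomposition of $\widetilde{K}_i$ yields
\[
\kk_n = \oplus_{j \in I_n}\Ss_j = \oplus_{j \in J_i}\Ker(\widetilde{K}_i - \widetilde{\kappa}_j^{(i)}\unit_{\widetilde{\kk}_i}) = \widetilde{\kk}_i,
\]
while $\mm_n = W_j \ff_j = \widetilde{\mm}_i$ is independent of the choice of $j \in I_n$. Assembling these identifications produces an isometric isomorphism $\oplus_{n \in I}(\mathcal{B}(\kk_n) \otimes \unit_{\mm_n}) \simeq \oplus_{i \in \widetilde{I}}(\mathcal{B}(\widetilde{\kk}_i) \otimes \unit_{\widetilde{\mm}_i}) \simeq \nT$, as required. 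The main obstacle is the backward direction of the equivalence-class identification: ruling out \emph{accidental} unitary equivalences between the Lindbladians associated with distinct minimal central projections of $\nT$. This step relies on the irreducibility of each $\T^{\widetilde{\mm}_i}$ (Theorem \ref{th:main} item~3) together with a careful argument showing that any such intertwiner must be implemented by an element of $\nT$, which is in turn forbidden by the orthogonality of the central projections.
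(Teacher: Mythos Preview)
Your approach is genuinely different from the paper's: you start from a known factor decomposition of $\nT$, push it forward to $\FT$ via Theorem~\ref{NT-atomic-FT}, and then try to match the $\sim$-classes with the blocks $J_i$ by uniqueness of atomic decompositions. The paper instead constructs the unitary $U$ directly from the $\sim$-classes, computes $U\nT U^*$ via the commutant characterization, and then proves two key facts: (a) the off-diagonal pieces $q_nU\nT U^*q_m$ vanish for $n\neq m$, using a weak$^*$ cluster-point argument together with the splitting $\mathfrak{I}(\h)=\RT\oplus\{\sigma: 0\in\overline{\{\T_{*t}\sigma\}}^w\}$ and the crucial consequence $\nT\cap\mathfrak{M}_s=\{0\}$ of atomicity (Theorem~\ref{equiv-NT-atom}); and (b) $\mathcal{N}(\T^{\mm_n})=\mathbb{C}\unit_{\mm_n}$, using that $\T^{\mm_n}=\T^{\ff_j}$ is irreducible with faithful invariant state.

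Your argument has a real gap precisely at the step you flag as the main obstacle. The claim that a Lindbladian-intertwining unitary $W:\widetilde{\mm}_{i'}\to\widetilde{\mm}_i$ produces elements $A\otimes W\in\nT$ does not follow. By Proposition~\ref{prop-struct-NT}(2), membership in $\nT$ requires $A\otimes W$ to commute with every $\delta_H^n(L_\ell)=\oplus_i\bigl(\unit\otimes\delta_{\widetilde{M}_0^{(i)}}^n(\widetilde{M}_\ell^{(i)})\bigr)$, which forces $W\,\delta_{\widetilde{M}_0^{(i')}}^n(\widetilde{M}_\ell^{(i')})=\delta_{\widetilde{M}_0^{(i)}}^n(\widetilde{M}_\ell^{(i)})\,W$ for all $n,\ell$. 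But the relation $j\sim k$ only says the \emph{Lindbladians} agree after conjugation; by the non-uniqueness of GKSL representations, the conjugated operators $W\widetilde{M}_\ell^{(i')}W^*$ may differ from $\widetilde{M}_\ell^{(i)}$ by a unitary mixing of the index $\ell$ together with scalar shifts, which destroys the termwise commutation you need. Irreducibility of $\T^{\widetilde{\mm}_i}$ does not repair this: it controls invariant projections, not the gauge freedom of the Kraus operators. So the backward implication ``$j\sim k\Rightarrow$ same $J_i$'' is not established, and without it the identification $\{I_n\}=\{J_i\}$ --- and hence the final isomorphism --- is unjustified. The paper's route sidesteps this issue entirely, because it never needs to compare two a priori factorizations of $\nT$.
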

\begin{proof} Given $n\in I$, by definition of $\mm_n$ we can define a unitary operator by setting
$$\begin{array}{rcl}U_n:\oplus_{j\in I_n}\left(\Ss_j\otimes\ff_j\right)&\to&\left(\oplus_{j\in I_n}\Ss_j\right)\otimes\mm_n=\kk_n\otimes\mm_n\\
\oplus_{j\in I_n}\left(u_j\otimes z_j\right)&\mapsto&\sum_{j\in I_n}\left(u_j^{(j)}\otimes V_jz_l\right)
\end{array}
$$
where $ u_j^{(j)}$ in $\oplus_{i\in I_n}\h_{\Ss_i}$ denotes the vector $$u_j^{(j)}:=\oplus_{i\in I_n}v_i,\qquad v_i:=\left\{\begin{array}{ll}0 &\ \mbox{if $i\neq j$}\\
u_j&\ \mbox{if $i=j$}\end{array}\right..$$
Now, since $$\h= \oplus_{j\in j}(\Ss_j\otimes \ff_j)=\oplus_{n\in I}\left(\oplus_{j\in I_n}(\Ss_j\otimes \ff_j)\right)$$ by the equality $J=\cup_{n\in I}I_n$,
by setting $U:=\oplus_{n\in I}U_n$ we get a unitary operator $U:\h\to\oplus_{n\in I}\left(\kk_n\otimes\mm_n\right)$ such that
$$U\FT U^*=\oplus_{n\in I}\left(\left(\oplus_{j\in I_n}\mathcal{B}(\Ss_j)\right)\otimes\unit_{\mm_n}\right).$$
In order to conclude the proof we have to show that $$U\nT U^*=\oplus_{n\in I}\left(\mathcal{B}(\oplus_{j\in I_n}\Ss_j)\otimes\unit_{\mm_n}\right).$$
To this end recall that, by Theorem \ref{th:dec-FT}, the operators $(L_\ell)_\ell,\ H$ in a GKSL representation of the generator $\Ll$ can be written as
\[
L_\ell  = \oplus_{j\in J}\left(\unit_{\Ss_j}\otimes N^{(j)}_\ell\right) \quad\forall\,\ell\ge 1,
\qquad H = \oplus_{j\in J}\left(\lambda_j\unit_{\Ss_j}\otimes \unit_{\ff_j} + \unit_{\Ss_j}\otimes N^{(j)}_0\right)
\]
with $(N_\ell^{(j)})_{\ell\geq 1}\subseteq\mathcal{B}(\Ss_j)$, $N^{(j)}_0=(N^{(j)}_0)^*\in\mathcal{B}(\ff_j)$ and $\lambda_j\in\mathbb{R}$ for all $j\in J$; moreover, by definition of $I_n$, we can choose operators $(M_l^{(n)})_l$ and $M_0^{(n)}=(M_0^{(n)})^*$ in $\mathcal{B}(\mm_n)$ such that $\{M_l^{(n)}, M_0^{(n)}\,:\,l\geq 1\}$ is a GKSL representation of the  the generator $\Ll^{\mm_n}$ of a QMS $\T^{\mm_n}$ on $\mathcal{B}(\mm_n)$ equivalent to $\{V_jN_l^{(j)}V_j^*, V_jN_0^{(j)}V_j^*\,:\,l\geq 1\}$ for {all} $j\in I_n$. Therefore, the operators
\begin{align*}L_\ell^\prime&:=\oplus_{n\in I}\left(\oplus_{j\in I_n}\left(\unit_{\Ss_j}\otimes V_j^*M^{(n)}_\ell V_j\right)\right)\\
H^\prime&:=\oplus_{n\in I}\left(\oplus_{j\in I_n}\left(\lambda_j\unit_{\Ss_j}\otimes \unit_{\ff_j} \right)+ \oplus_{j\in I_n}\left(\unit_{\Ss_j}\otimes V_j^*M^{(n)}_0V_j\right)\right),
\end{align*}
clearly give the same GKSL representation of $\{L_\ell,\ H\,:\,\ell\geq 1\}$.
Moreover we have
\begin{align*}UL_\ell^\prime U^*&=\oplus_{n\in I}U_n\left(\oplus_{j\in I_n}\left(\unit_{\Ss_j}\otimes V_j^*M^{(n)}_\ell V_j\right)\right)U_n^*=\oplus_{n\in I}\left(\unit_{\kk_n}\otimes M^{(n)}_\ell\right)\\
UH^\prime U^*&=\oplus_{n\in I}U_n\left(\oplus_{j\in I_n}\left(\lambda_j\unit_{\Ss_j}\otimes \unit_{\ff_j} \right)+ \oplus_{j\in I_n}\left(\unit_{\Ss_j}\otimes V_j^*M^{(n)}_0V_j\right)\right)U_n^*\\
&=\oplus_{n\in I}\left(K_n\otimes \unit_{\mm_n} + \unit_{\kk_n}\otimes M^{(n)}_0\right)
\end{align*}
with $K_n:=\left(\oplus_{j\in I_n}\lambda_j\unit_{\Ss_j}\right)=K_n^*\in\mathcal{B}(\kk_n)$, so that
\begin{align*}U\delta_{H\prime}^m(L_\ell^\prime)U^*=\delta_{UH^\prime U^*}^{m}(UL_\ell^\prime U^*)&=\oplus_{n\in I}\left(\unit_{\kk_n}\otimes\delta_{M_0^{(n)}}^{m}(M_\ell^{(n)})\right)\\
U\delta_{H\prime}^m(L_\ell^{\prime *})U^*=\delta_{UH^\prime U^*}^{m}(UL_\ell^{\prime *} U^*)&=\oplus_{n\in I}\left(\unit_{\kk_n}\otimes\delta_{M_0^{(n)}}^{m}(M_\ell^{(n)*})\right)
\end{align*}
for all $m\geq 0$. Therefore, since $U\nT U^*=\left\{U\delta_{H\prime}^m(L_\ell^\prime)U^*,U\delta_{H\prime}^m(L_\ell^{\prime *})U^*\,:\,m\geq 0\right\}^\prime$ by item $2$ of Proposition \ref{prop-struct-NT}, we obtain that an operator $x\in\mathcal{B}\left(\oplus_{n\in I}\left(\kk_n\otimes\mm_n\right)\right)$ belongs to $U\nT U^*$ if and only if it commutes with $$\oplus_{n\in I}\left(\unit_{\kk_n}\otimes\delta_{M_0^{(n)}}^{m}(M_\ell^{(n)})\right),\qquad\oplus_{n\in I}\left(\unit_{\kk_n}\otimes\delta_{M_0^{(n)}}^{m}(M_\ell^{(n)*})\right)$$ for every $m\geq 0$.

Now, let $q_n:=\unit_{\kk_n}\otimes\unit_{\mm_n}$ the orthogonal projection onto $\kk_n\otimes\mm_n$. We clearly have $q_n\in U\FT U^*$ and $\sum_{n\in I}q_n=\unit$. Therefore, the algebra $q_n\B q_n=\mathcal{B}(\kk_n\otimes\mm_n)$
is preserved by the action of every map $\wT_t:=U\T_t(U^*\cdot U)U^*$, and so we can
consider the restriction of $\wT$ to this algebra, getting a QMS
on $\mathcal{B}(\kk_n\otimes\mm_n)$ denoted by $\T^{(n)}$ and satisfying $\mathcal{N}(\T^{(n)})=q_nU\nT U^*q_n$, where $\mathcal{N}(\T^{(n)})$ is the decoherence-free algebra of $\T^{(n)}$. Note that, since each $q_n$ commutes with every
$UL_\ell^\prime U^*$, $UL_\ell^{\prime *} U^*$ and $UH^\prime U^*$, a GKSL representation of the generator $\Ll^{(n)}$ of $\T^{(n)}$ is given by operators $$q_nUH^\prime U^*q_n=K_n\otimes\unit_{\mm_n}+\unit_{\kk_n}\otimes M_0^{(n)}\qquad q_nUL^\prime_\ell U^*q_n=\unit_{\kk_n}\otimes M_\ell^{(n)},$$ so that $$\wT_t(x\otimes y)=e^{\mi tK_n} xe^{-\mi t K_n}\otimes\T_t^{\mm_n}(y)\qquad\forall\,x\in\mathcal{B}(\kk_n),\ y\in\mathcal{B}(\mm_n)$$ and
\begin{align}\nonumber\mathcal{N}(\T^{(n)})&=\{\unit_{\kk_n}\otimes\delta_{M_0^{(n)}}^{m}(M_\ell^{(n)}), \unit_{\kk_n}\otimes\delta_{M_0^{(n)}}^{m}(M_\ell^{(n)*})\,:\ell\geq 1,\ m\geq 0\}^\prime\\
\label{NTn-tens}&=\mathcal{B}(\kk_n)\otimes\mathcal{N}(\T^{\mm_n}).
\end{align}
%

Since $$U\nT U^*=\bigoplus_{n,m\in I}q_nU\nT U^*q_m$$ and equation \eqref{NTn-tens} holds, to conclude the proof we have to show that $$q_nU\nT U^*q_m=\{0\}\quad\forall\,n\neq m,\qquad\mbox{and}\qquad\mathcal{N}(\T^{\mm_n})=\mathbb{C}\unit_{\mm_n}.$$ So, let $x\in U\nT U^*$ and consider $n,m\in I$ with $n\neq m$. Since the net $(\T_t(q_nxq_m))_{t\geq 0}$ is bounded in norm and the unit ball is weakly* compact, there exists a weak* cluster point $y$ such that $y=w^*-\lim_\alpha\T_{t_\alpha}(q_nxq_m)$. Therefore, for $\sigma\in\RT$ with $\T_{*t}(\sigma)=e^{\mi t\lambda}\sigma$ for some $\lambda\in\mathbb{R}$, we have
$$\tr{\sigma y}=\lim_\alpha\tr{\sigma\T_{t_\alpha}(q_nxq_m)}=\lim_\alpha e^{\mi t_\alpha\lambda}\tr{\sigma q_nxq_m}.$$ Now, if $\lambda\neq 0$ this implies $\tr{\sigma q_nxq_m}=0=\tr{\sigma y}$; otherwise, since $\sigma$ is an invariant state, we automatically have $q_m\sigma q_n=0$ for $n\neq m$ by Theorem \ref{th:dec-FT}, so that $\tr{\sigma y}$ is $0$ again. Consequently, $\tr{\sigma y}=0$ for all $\sigma\in\RT$. On the other hand, taking $\eta$ such that $0\in\overline{\{\T_{*t}(\eta)\}}^w_{t\geq 0}$, up to passing to generalized subsequences we have
$$\tr{\eta y}=\lim_\alpha\tr{\eta\T_{t_\alpha}(q_nxq_m)}=\lim_\alpha \tr{\T_{*t_\alpha}(\eta)q_nxq_m}=0.$$
We can then conclude that $\tr{\sigma y}=0$ for all $\sigma\in\mathfrak{I}(\oplus_n\left(\kk_n\otimes\mm_n\right))$ by virtue of equation \eqref{dec-J}, and so $y=0$. This means that $q_nxq_m$ belongs to $\nT\cap\mathfrak{M}_s$ and this intersection is $\{0\}$ by Corollary \ref{equiv-NT-atom} ($\nT$ is atomic). Therefore $q_nxq_m=0$ for $n\neq m$ and so $$U\nT U^*=\oplus_{n\in I}q_n\nT q_n=\oplus_{n\in I}\mathcal{N}(\T^{(n)}).$$
Moreover, since $\nT$ is atomic by Corollary \ref{equiv-NT-atom}, {the general theory of von Neumann algebras (see e.g. \cite{Take}) says that} each algebra $\mathcal{N}(\T^{(n)})=\mathcal{B}(\kk_n)\otimes\mathcal{N}(\T^{\mm_n})$ is atomic too, and consequently $\mathcal{N}(\T^{\mm_n})$ is atomic. Finally, recalling that, by construction, $\T_t^{\mm_n}=\T_t^{\ff_j}$ for all $j\in I_n$ with $\T_t^{\ff_j}$ an irreducible QMS having a faithful invariant state (see Theorem \ref{th:dec-FT}), we get $\mathcal{N}(\T^{\mm_n})=\mathbb{C}\unit_{\mm_n}$ thanks to Proposition $4.3$ in \cite{DFSU}.
\end{proof}

\begin{remark}\rm
Theorem \ref{FT-atomic-NT} provides in particular a way to pass from the decomposition of GKSL operators $H$, $(L_l)_l$ of $\Ll$ according to the splitting of $\h=\oplus_{j\in j}(\Ss_j\otimes \ff_j)$ associated with  the atomic algebra $\FT$, to the other one decomposition with respect to the splitting $\h=\oplus_{n\in I}\left(\kk_n\otimes\mm_n\right)$ associated with $\nT$.

More precisely, if $\FT=\oplus_{j\in J}\left(\mathcal{B}(\Ss_j)\otimes\unit_{\ff_j}\right)$ and
\[
L_\ell  = \oplus_{j\in J}\left(\unit_{\Ss_j}\otimes N^{(j)}_\ell\right) \quad\forall\,\ell\ge 1,
\qquad H = \oplus_{j\in J}\left(\lambda_j\unit_{\Ss_j}\otimes \unit_{\ff_j} + \unit_{\Ss_j}\otimes N^{(j)}_0\right)
\]
with $(N_\ell^{(j)})_{\ell\geq 1}\subseteq\mathcal{B}(\Ss_j)$, $N^{(j)}_0\in\mathcal{B}(\ff_j)$ and $\lambda_j\in\mathbb{R}$ for all $j\in J$, then we can decompose $H, (L_\ell)_\ell$ with respect to the splitting $\h=\oplus_{n\in I}\left(\kk_n\otimes\mm_n\right)$ given by \eqref{def-kk-mm}
as follows:
\[
L_\ell=\oplus_{n\in I}
\left(\unit_{\kk_n}\otimes M_\ell^{(n)}\right),\qquad H=\oplus_{n\in I}\left(K_n\otimes\unit_{\mm_n}
+\unit_{\kk_n}\otimes M_0^{(n)}\right),
\]
where
$(M_\ell^{(n)})_{\ell\geq 1}, M^{(n)}_0=M^{(n)*}_0$ are operators in $\mathcal{B}(\mm_n)$ giving the same GKSL representation of $(N_l^{(j)})_{\ell\geq 1}, N_0^{(j)}$ for all $j\in I_n$, and every $K_n$ is the self-adjoint operator on  $\mathcal{B}({\kk_n})$ having $(\lambda_j)_{j\in I_n}$ as eigenvalues and $(\Ss_j)_{j\in I_n}$ as corresponding eigenspaces.
\end{remark}

\smallskip

{\bf Acknowledgements}. The financial support of MIUR FIRB 2010 project RBFR10COAQ
``Quantum Markov Semigroups and their Empirical Estimation'' is gratefully acknowledged.

\null\bigskip
\begin{quote}
\footnotesize{$^1$ Department of Mathematics, Politecnico di Milano,
Piazza Leonardo da Vinci 32,  I - 20133 Milano, Italy
  E-mail: \texttt{franco.fagnola@polimi.it} \\
$^2$ Department of Mathematics, University of Genova,
Via Dodecaneso,\\  I - 16146 Genova, Italy
E-mail: \texttt{sasso@dima.unige.it} \\
$^3$ Department of Mathematics, University of Genova,
Via Dodecaneso,\\  I - 16146 Genova, Italy
E-mail: \texttt{umanita@dima.unige.it}
}
\end{quote}


\begin{thebibliography}{99}


\bibitem{AFH}  {\textsc  L. Accardi, F. Fagnola, S. Hachicha},
Generic q-Markov semigroups and speed of convergence of q-algortithms,
{\it Infin. Dim. Anal. Quantum Prob. Rel. Topics} {\bf 9}, 567 (2006).

\bibitem{AFR}
{\textsc J. Agredo, F. Fagnola and R. Rebolledo},
Decoherence free subspaces of a quantum Markov semigroup,
{\it J. Math. Phys.}  {\bf 55}, 112201 {\tt dx.doi.org/ 10.1063/1.4901009}, (2014).

\bibitem{ABFJ}
{\textsc V.V. Albert, B. Bradlyn, M. Fraas, L. Jiang},
Geometry and response of Lindbladians,
{\it Phys. Rev. X} {\bf 6}, 041031 (2016).

\bibitem{albev-frobenius}
{\textsc S.\ Albeverio and R.\ H\o egh-Krohn}, Frobenius theory for positive maps of von Neumann algebras, {\it Commun. Math. Phys.}, {\bf 64}, 83--94 (1978).

\bibitem{batkai}
{\textsc A.\ Batkai, U.\ Groh, D.\ Kunszenti-Kovacs and M.\ Schreiber}, Decomposition of operator semigroups on $W^*$-algebras, {\it Semigroup Forum}, {\bf 84}, no.1, 8--24 (2012).


\bibitem{BN}
{\textsc B.\ Baumgartner and H.\ Narnhofer}, The structure of state space concerning
quantum dynamical semigroups. {\it Rev. Math. Phys.} {\bf 24}, no. 2, 1250001, (2012).

\bibitem{BlOl}
{\textsc Ph.\ Blanchard and R.\ Olkiewicz}, Decoherence
Induced Transition form Quantum to Classical Dynamics,
{\it Rev. Math. Phys.} {\bf 15}, no. 3, 217--243, (2003).

\bibitem{viola}
{\textsc R.\, Blume-Kohout, H.\,K.\, Ng, D.\, Poulin, L.\, Viola}, Information-preserving
structures: a general framework for quantum zero-error information,
{\it Phys. Rev. A} {\bf 82}, 062306, (2010).

\bibitem{CSU2}
{\textsc R. Carbone, E. Sasso and V. Umanit\`a}, Decoherence for Quantum
Markov Semigroups  on matrix spaces, {\it Annales Henri Poincar\'e}
{\bf 14},  no. 4, 681--697 (2013).

\bibitem{generic}
{\textsc R.\ Carbone, E.\ Sasso, V.\ Umanit\`a}, On the asymptotic behavior of
generic quantum Markov semigroups, {\it  Infin. Dimens. Anal. Quantum Probab.
Relat. Top.}, {\bf 17}, no. 1, 1450001, p.18 (2014).

\bibitem{CSU-new}
{\textsc R. Carbone, E. Sasso and V. Umanit\`a},
Environment induced decoherence for markovian evolutions.
{\it J. Math. Phys.} {\bf 56}, 092704 (2015).

\bibitem{isolati}
{\textsc R.\ Carbone, E.\ Sasso, V.\ Umanit\`a}, Some remarks on decoherence
for generic quantum Markov semigroup,  {\it  Infin. Dimens. Anal.
Quantum Probab. Relat. Top.} {\bf 20}, 1750012 (2017).

\bibitem{DFSU}
{\textsc J.\ Deschamps, F.\ Fagnola, E.\ Sasso, V.\ Umanit\`a}, Structure of Uniformly Continuous Quantum Markov Semigroups, {\it Rev. in Math. Phys.} {\bf 28}, n.1, 165003 (32 pages), (2016).

\bibitem{DFR}
{\textsc A.\ Dhahri, F.\ Fagnola and R.\ Rebolledo},
The decoherence-free subalgebra of a quantum Markov semigroup
with unbounded generator. {\it Infin. Dimens. Anal. Quantum
Probab. Relat. Top.} {\bf 13}, 413--433, (2010).

\bibitem{EN}
{\textsc K.\ J.\ Engel, R.\ Nagel}, {\it One-parameter semigroups for linear evolution equations}, Springer-Verlag, New York, 2000.

\bibitem{Evans}
{\textsc D.E. Evans}, {Irreducible quantum dynamical semigroups},
 {\it Commun. Math. Phys.} {\bf 54}, 293--297, (1977).

\bibitem{Frigerio-staz}
{\textsc A. Frigerio}, Stationary States of Quantum Dynamical Semigroups,
{\it Commun. Math. Phys.} {\bf 63}, 269--276, (1978).

\bibitem{FV82}
A. Frigerio and  M. Verri, Long-Time Asymptotic Properties of
Dynamical Semigroups on $W^*$-algebras. {\it Math. Z.} {\bf 180}, 275-286, (1982).

\bibitem{GoKoSu}
V.~Gorini, A.~Kossakowski, E.C.G.~Sudarshan, Completely positive
dynamical semigroups of $N$-level systems. {\it J. Math. Phys.}
{\bf 17}, 821--825, (1976).

\bibitem{Hayes}
Hayes, Normal conditional expectations between von Neumann algebras.
Preprint.

\bibitem{Hellmich}
M.\ Hellmich, Quantum dynamical semigroups and decoherence,
{\it Advances in Math. Phys.}, 625978 (16 pages), (2011).

\bibitem{LW}
{\textsc  D.\ A.\ Lidar and K.\ B.\ Whaley},
in: {\it Decoherence-free Subspaces and Subsystems}, { Lecture Notes in Physics},
{\bf 622} (2003), 83--120.

\bibitem{Lindb}
G. Lindblad, On the Generators of Quantum Dynamical Semigroups.
{\it Commun. Math. Phys.} {\bf 48}, 119--130, (1976).

\bibitem{Kummerer}
B. Kummerer, and R. Nagel,  Mean ergodic semigroups on $W\sp{\ast}$-algebras.
{\it Acta Sci. Math.} (Szeged)  {\bf 41}, no. 1-2, 151--159, (1979).

\bibitem{LO}
{\textsc P. Lugiewicz, R.\ Olkiewicz}, Classical properties of infinite quantum open systems,
{\it Comm. Math. Phys.} {\bf 239}, no.1, (2003) 241--259.

\bibitem{Partha}
{\textsc K.\ R.\ Parthasarathy}, {\it An introduction to
quantum stochastic calculus}, {\it Monographs in Mathematics} 85,
Birkh\"auser-Verlag, Basel 1992.

\bibitem{Take-sp-cond}
{\textsc M.\ Takesaki}, Conditional expectations in von Neumann algebras, {\it J. Funct. Anal.} {\bf 9}, 306--321 (1972).

\bibitem{Take}
{\textsc M.\ Takesaki}, {\it Theory of operator algebras I}, Springer-Verlag,
 New York-Heidelberg, 1979.

\bibitem{TV}
{\textsc F.\ Ticozzi and L.\ Viola}, Quantum Markovian Subsystems: invariance, attractivity, and control, {\it IEEE Transaction on automatic control}, {\bf 53}, 2048--2063, (2008).

\bibitem{Tom3}
{\textsc J.\ Tomiyama}, On the projection of norm one in W$^*$-algebras III,
{\it T\^ohoku Math. J.} {\bf 11}, 125--129, (1959).


\bibitem{wolf}
M.\ M.\ Wolf, Quantum channel and operations, guided tour, notes, (2012).

\end{thebibliography}
\end{document}